\documentclass[
  aps,
  prx,
  reprint,
  superscriptaddress,
  amsmath,
  amssymb,
  longbibliography,
  floatfix
]{revtex4-2}

\usepackage{amsthm}
\usepackage{graphicx}
\usepackage{tikz}
\usetikzlibrary{matrix}
\newcommand{\partitionedmatrix}[1]{%
  \begin{tikzpicture}[baseline=-0.5ex]
    \matrix (m) [matrix of math nodes, ampersand replacement=\&,
      left delimiter=(, right delimiter=), inner sep=0pt,
      nodes={inner sep=0pt, minimum width=1em, minimum height=1.2em}] {#1};
    \draw[dash pattern=on 2pt off 2pt] (m-1-4.north east) -- (m-8-4.south east);
    \draw[dash pattern=on 2pt off 2pt] (m-4-1.south west) -- (m-4-8.south east);
  \end{tikzpicture}%
}
\usepackage{hyperref}

\hypersetup{
  colorlinks=true,
  citecolor=blue,
  linkcolor=blue,
  urlcolor=blue
}

\newtheorem{theorem}{Theorem}
\newtheorem{lemma}[theorem]{Lemma}
\newtheorem{definition}[theorem]{Definition}

\newcommand{\F}{\mathbb{F}}
\newcommand{\cH}{\mathcal{H}}
\newcommand{\cU}{\mathcal{U}}
\newcommand{\Sp}{\operatorname{Sp}}
\newcommand{\rank}{\operatorname{rank}}
\newcommand{\rad}{\operatorname{rad}}
\newcommand{\OpSch}{\operatorname{OpSch}}
\newcommand{\SWAP}{\operatorname{SWAP}}
\newcommand{\CZ}{\operatorname{CZ}}

\begin{document}

\title{Reducing the Entanglement Cost of Distributed Bipartite Quantum Computation with Constant Qubit Overhead}

\author{Kosuke Matsui}
\email{kosuke.matsui@phys.s.u-tokyo.ac.jp}
\affiliation{Department of Physics, Graduate School of Science, The University of Tokyo, Hongo 7-3-1, Bunkyo-ku, Tokyo, Japan}
\affiliation{Hon Hai (Foxconn) Research Institute, Taipei, Taiwan}

\author{Jun-Yi Wu}
\email{junyiwuphysics@gmail.com}
\affiliation{Hon Hai (Foxconn) Research Institute, Taipei, Taiwan}
\affiliation{Department of Physics, Tamkang University, New Taipei 251301, Taiwan, Republic of China}

\author{Min-Hsiu Hsieh}
\email{min-hsiu.hsieh@foxconn.com}
\affiliation{Hon Hai (Foxconn) Research Institute, Taipei, Taiwan}

\author{Mio Murao}
\email{murao@phys.s.u-tokyo.ac.jp}
\affiliation{Department of Physics, Graduate School of Science, The University of Tokyo, Hongo 7-3-1, Bunkyo-ku, Tokyo, Japan}

\begin{abstract}
Distributed quantum computation connects multiple quantum processing units (QPUs) through quantum communication to jointly perform large-scale quantum computations.
Since the number of qubits available at each QPU is limited, it is important to reduce quantum communication while keeping the qubit overhead small.
To this end, we study an entanglement-assisted model, in which entanglement consumption serves as a measure of quantum communication cost.
For exact deterministic implementations in this model, the operator Schmidt rank of the target bipartite unitary provides a general lower bound on entanglement cost when qubit overhead is unrestricted.
However, it has remained unclear how closely this bound can be approached with constant qubit overhead.
In this work, we show that this lower bound is attainable for every bipartite Clifford unitary using \emph{at most two auxiliary qubits per QPU}.
In addition, for non-Clifford unitaries specified by exact Clifford+$T$ decompositions with $T$-count $t$, we construct implementations with the same qubit overhead whose entanglement cost is at most $2t$ Bell pairs above this lower bound.
\end{abstract}

\maketitle

\section{Introduction}

Scaling the number of qubits available in a single quantum processor remains technically challenging~\cite{Bravyi2024,2023google,mohseni2024,zhou2025}.
Distributed quantum computation connects smaller quantum processing units (QPUs) through quantum communication so that they can execute computations that require more qubits than are available in any single QPU~\cite{eisert2000,wu2023,andresmartinez2024,main2025}.
In this work, we consider an entanglement-assisted model in which nonlocal gates between QPUs are implemented using local operations and classical communication (LOCC) assisted by shared entanglement~\cite{bennett1993,eisert2000,Collins2001,Cirac2001,DurCirac2001,Devetak2004,Devetak2008}.
In this model, it is important to reduce the entanglement cost, the amount of shared entanglement consumed during an implementation.
Regarding qubit requirements, distributing a computation across multiple QPUs can require more qubits in total than executing it on a single quantum processor, resulting in qubit overhead.
These additional qubits are used, for example, to hold the local halves of shared Bell pairs or quantum states teleported from another QPU.
Since the number of qubits available at each QPU is limited, it is important to reduce the entanglement cost while keeping the qubit overhead small.

Existing circuit-compilation methods pursue this goal through several approaches.
Some optimize the placement of qubits among QPUs to reduce quantum communication while respecting each QPU's qubit limit~\cite{andresmartinez2019,sundaram2021,Sundaram_2022,Ferrari_2023,Burt_2024}.
Another approach reduces the entanglement cost of distributed computations by implementing suitable groups of nonlocal gates together.
The nonlocal gates are grouped so that the implementation requires no more auxiliary qubits than are available at each QPU~\cite{autocomm2022,wu2023,andresmartinez2024}.
The resulting implementations reduce entanglement costs under the respective qubit constraints.
However, these studies do not establish whether the entanglement costs they achieve are optimal or, if not, how much they exceed the minimum possible entanglement cost.

Without constraints on qubit overhead, lower bounds on entanglement cost and their attainability have been studied in various operational settings~\cite{eisert2000,Chen_PhysRevA.93.042331,Soeda_PhysRevLett.107.180501,Wakakuwa_CodingThorem,Wakakuwa_ComplexityCausal,akibue2025computingentanglementcostsnonlocal,cleve2026lowerboundsnonlocalcomputation}.
For exact deterministic entanglement-assisted LOCC implementations of a bipartite unitary, the operator Schmidt rank of the unitary provides a lower bound on the entanglement cost~\cite{stahlke2011}.
In particular, this bound can be attained for Clifford unitaries, which play a central role in fault-tolerant quantum computation~\cite{gottesmanchuang1999}.
One implementation that attains this bound prepares the Choi state of the target Clifford unitary and uses it for gate teleportation~\cite{fattal2004,gottesmanchuang1999}.
However, storing this Choi state requires auxiliary qubits in proportion to the number of qubits on which the original unitary acts.
This raises the question: how closely can this general lower bound on entanglement cost be approached with constant qubit overhead?

Our first main result shows that this lower bound remains attainable for Clifford unitaries with constant qubit overhead.
Specifically, every bipartite Clifford unitary admits an exact deterministic entanglement-assisted LOCC implementation that attains the bound using at most two auxiliary qubits at each QPU.
In our setting, qubits may be measured during circuit execution and then reset~\cite{Iqbal2024,Ryan-Anderson2021realtimefaulttolerantqec,Gramham2023midcircuitmeasurements,Lis2023midcircuitoperations,PhysRevApplied.17.014014,Sivak_2023,Lloyd2001engineeringquantumdynamics,Andersson2008binarysearchtrees,Ivashkov2024povmswithdynamiccircuits,Shen2017quantumchannelconstruction} or reused to prepare Bell pairs.
Our result relies on decomposing the target Clifford unitary into Clifford gates that can each be implemented with constant qubit overhead, while keeping the total entanglement consumption optimal.
Thus, restricting the qubit overhead to a constant need not increase the entanglement cost of Clifford implementations.

Our second main result bounds the excess entanglement cost above this lower bound for non-Clifford unitaries specified by Clifford+$T$ decompositions~\cite{boykin2000} while keeping qubit overhead constant.
Specifically, we give an exact deterministic entanglement-assisted LOCC implementation of a bipartite unitary specified by an exact Clifford+$T$ decomposition with a $T$-count of $t$.
The implementation uses at most two auxiliary qubits at each QPU, with an entanglement cost at most $2t$ Bell pairs above the lower bound.
This bound on the excess entanglement cost depends only on the $T$-count of the given decomposition, with no additional term depending on the depth or total gate count of the original circuit.

The remainder of this paper is organized as follows.
We describe our problem setting and define qubit overhead in Sec.~\ref{sec:setting}.
Sections~\ref{sec:clifford} and~\ref{sec:clifford-t} present our results for Clifford unitaries and unitaries with Clifford+$T$ decompositions, respectively.
The appendixes provide technical details and proofs.

\section{Problem setting}
\label{sec:setting}

In this section, we specify our setting for the distributed implementation of bipartite unitaries and define the associated resource costs.

Alice and Bob hold $n_A$ and $n_B$ qubits, with Hilbert spaces $\cH_A:=(\mathbb{C}^2)^{\otimes n_A}$ and $\cH_B:=(\mathbb{C}^2)^{\otimes n_B}$.
They implement a bipartite unitary $U\in\cU(\cH_A\otimes\cH_B)$ exactly and deterministically by local operations and classical communication (LOCC), assisted by Bell pairs.
Here, $\cU(\cH)$ denotes the set of unitary operators on a Hilbert space $\cH$.
We assume that all supplied Bell pairs are perfect, each in the state
\begin{equation}
  \lvert\Phi\rangle:=\frac{1}{\sqrt{2}}(\lvert0\rangle\lvert0\rangle+\lvert1\rangle\lvert1\rangle).
  \label{eq:bell-pair}
\end{equation}
In this paper, we define the \emph{entanglement cost} of an implementation as the total number of Bell pairs it consumes.

In addition, measurements may be performed during circuit execution, after which the measured qubits can be reset to $\lvert0\rangle$ or reused to prepare perfect Bell pairs.
With these operations allowed, we use the term \emph{qubit overhead} to quantify the number of qubits required at each party throughout the implementation in addition to those needed to hold the corresponding input state.
A qubit overhead of $(m_A,m_B)$ means that at most $n_A+m_A$ and $n_B+m_B$ qubits are used at Alice and Bob, respectively, at any stage of the implementation.
Note that any additional qubits required to generate Bell pairs are excluded from the qubit overhead.
The following definition formalizes this setting and the associated qubit overhead.

\begin{definition}[Qubit overhead in distributed implementations]
\label{def:space-overhead}
Let $U\in\cU(\cH_A\otimes\cH_B)$ be a bipartite unitary operator, and let $m_A,m_B\in\mathbb{Z}_{\geq0}$.
For a protocol with $R$ rounds, let $i_r$ and $j_r$ denote Alice's and Bob's measurement outcomes in round $r$, respectively.
We write $h_{\le r}:=(i_1,j_1,\ldots,i_r,j_r)$ for the sequence of measurement outcomes through round $r$, with $h_{\le0}:=()$ denoting the empty sequence.

We say that $U$ has an entanglement-assisted LOCC implementation with qubit overhead $(m_A,m_B)$ if, for some finite $R$, there exist unitary operators $V^{(r)}_{h_{\le r-1}}\in\cU((\mathbb{C}^2)^{\otimes(n_A+m_A)})$ and $W^{(r)}_{h_{\le r-1}}\in\cU((\mathbb{C}^2)^{\otimes(n_B+m_B)})$ for each $r=1,2,\ldots,R$ and each sequence of measurement outcomes $h_{\le r-1}$, such that the protocol shown in Figure~\ref{fig:qubit-space-overhead} implements $U$ exactly and deterministically.
More formally, for every normalized input state $\lvert\psi\rangle\in\cH_A\otimes\cH_B$, we require
\begin{equation}
  U\bigl(\lvert\psi\rangle\!\langle\psi\rvert\bigr)U^\dagger
  =\sum_{h_{\le R}}E_{h_{\le R}}
  \bigl(\lvert\psi\rangle\!\langle\psi\rvert\bigr)E_{h_{\le R}}^\dagger,
  \label{eq:exact-deterministic-branch-channel}
\end{equation}
where
\begin{align}
  E_{h_{\le R}}={}
  &(\mathbb{I}\otimes\langle i_R\rvert\langle j_R\rvert)
  (V^{(R)}_{h_{\le R-1}}\otimes W^{(R)}_{h_{\le R-1}})
  \lvert0\rangle^{\otimes}\lvert\Phi\rangle^{\otimes}\notag\\
  &\cdots\notag\\
  &(\mathbb{I}\otimes\langle i_2\rvert\langle j_2\rvert)
  (V^{(2)}_{h_{\le1}}\otimes W^{(2)}_{h_{\le1}})
  \lvert0\rangle^{\otimes}\lvert\Phi\rangle^{\otimes}\notag\\
  &(\mathbb{I}\otimes\langle i_1\rvert\langle j_1\rvert)
  (V^{(1)}_{h_{\le0}}\otimes W^{(1)}_{h_{\le0}})
  \lvert0\rangle^{\otimes}\lvert\Phi\rangle^{\otimes}.
  \label{eq:qubit-limited-ealocc}
\end{align}
The tensor-power exponents in $\lvert0\rangle^{\otimes}$ and $\lvert\Phi\rangle^{\otimes}$ are omitted for notational simplicity.
See Figure~\ref{fig:qubit-space-overhead} for details.
\end{definition}

\begin{figure*}[t]
  \centering
  \includegraphics[width=0.95\textwidth]{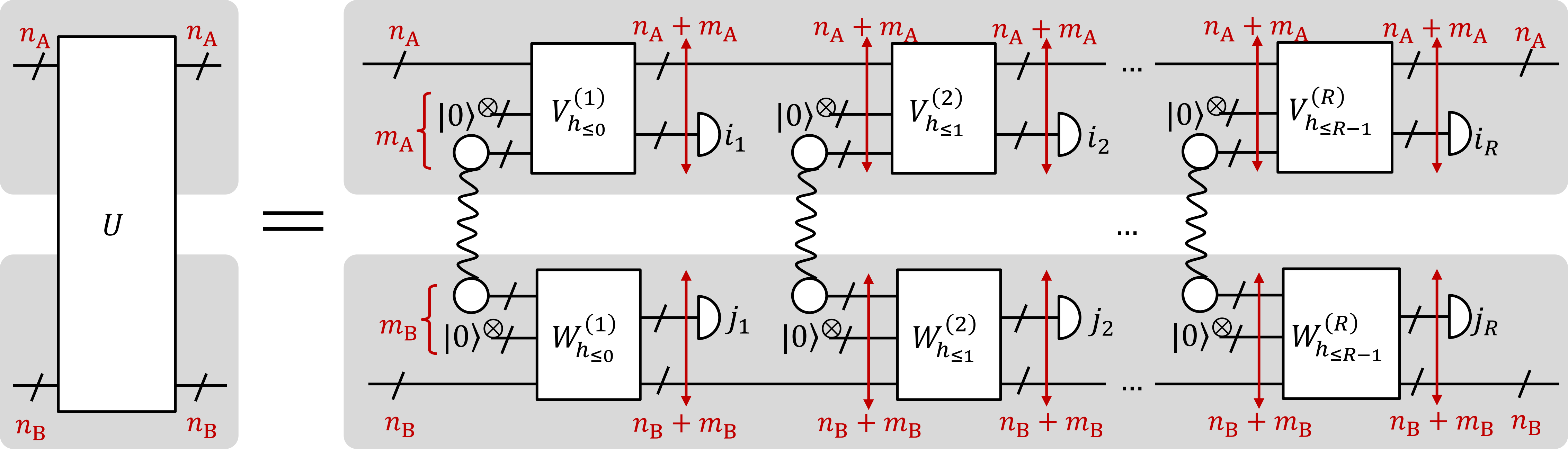}
  \caption{Definition of qubit overhead $(m_A,m_B)$.
  Red braces indicate the qubit overheads $m_A$ and $m_B$, and red arrows indicate the total qubit counts $n_A+m_A$ and $n_B+m_B$ at Alice and Bob, respectively.
  The unitary operators $V^{(r)}_{h_{\le r-1}}$ and $W^{(r)}_{h_{\le r-1}}$ depend on the sequence of measurement outcomes $h_{\le r-1}$ obtained before round $r$.
  Each measurement symbol represents computational-basis measurements on a possibly empty subset of the qubits.
  The tensor-power exponent in each $\lvert0\rangle^{\otimes}$ is omitted; the numbers of $\lvert0\rangle$ states and Bell pairs may be zero.
  At each party, the total number of auxiliary qubits initially prepared in $\lvert0\rangle$ states or as halves of Bell pairs is $m_A$ (or $m_B$).
  Before each later round, the number of qubits prepared in this way equals the number measured in the preceding round.}
  \label{fig:qubit-space-overhead}
\end{figure*}

\section{Distributed implementation of Clifford unitaries}
\label{sec:clifford}

\subsection{Operator Schmidt rank and entanglement cost}

In this subsection, we briefly review known results on the entanglement cost of bipartite unitaries when qubit overhead is unrestricted.

For a bipartite operator $U$ acting on $\cH_A\otimes\cH_B$, its \emph{operator Schmidt rank} $\OpSch(U)$ is the smallest integer $r$ for which there exist operators $A_j\in\mathcal{L}(\cH_A)$ and $B_j\in\mathcal{L}(\cH_B)$ satisfying
\begin{equation}
  U=\sum_{j=1}^{r}A_j\otimes B_j.
\end{equation}
Here, $\mathcal{L}(\cH)$ denotes the space of linear operators on $\cH$.
For a bipartite unitary $U$, the operator Schmidt rank gives a lower bound on the required entanglement resources when qubit overhead is unrestricted.
Specifically, if $U$ admits an exact deterministic LOCC implementation using a shared pure resource state of Schmidt rank $D$, then~\cite{stahlke2011}
\begin{equation}
  D\geq\OpSch(U).
\end{equation}
For a resource of $k$ Bell pairs, $D=2^k$, and hence
\begin{equation}
  k\geq \left\lceil\log_2\OpSch(U)\right\rceil.
  \label{eq:opsch-lower-bound}
\end{equation}

It is known that this bound is attainable for every bipartite Clifford unitary $U_c$~\cite{fattal2004,gottesmanchuang1999}.
In the implementation illustrated in Fig.~\ref{fig:clifford-gate-teleportation}, we first prepare the Choi state of $U_c$.
As a bipartite stabilizer state, it can always be prepared from $\log_2\OpSch(U_c)$ Bell pairs and local $\lvert0\rangle$ states using local Clifford operations~\cite{fattal2004}.
Gate teleportation using this state implements $U_c$ without additional shared entanglement~\cite{gottesmanchuang1999}.
However, holding the Choi state results in qubit overhead $(2n_A,2n_B)$.

\begin{figure}[tbp]
  \centering
  \includegraphics[width=0.9\columnwidth]{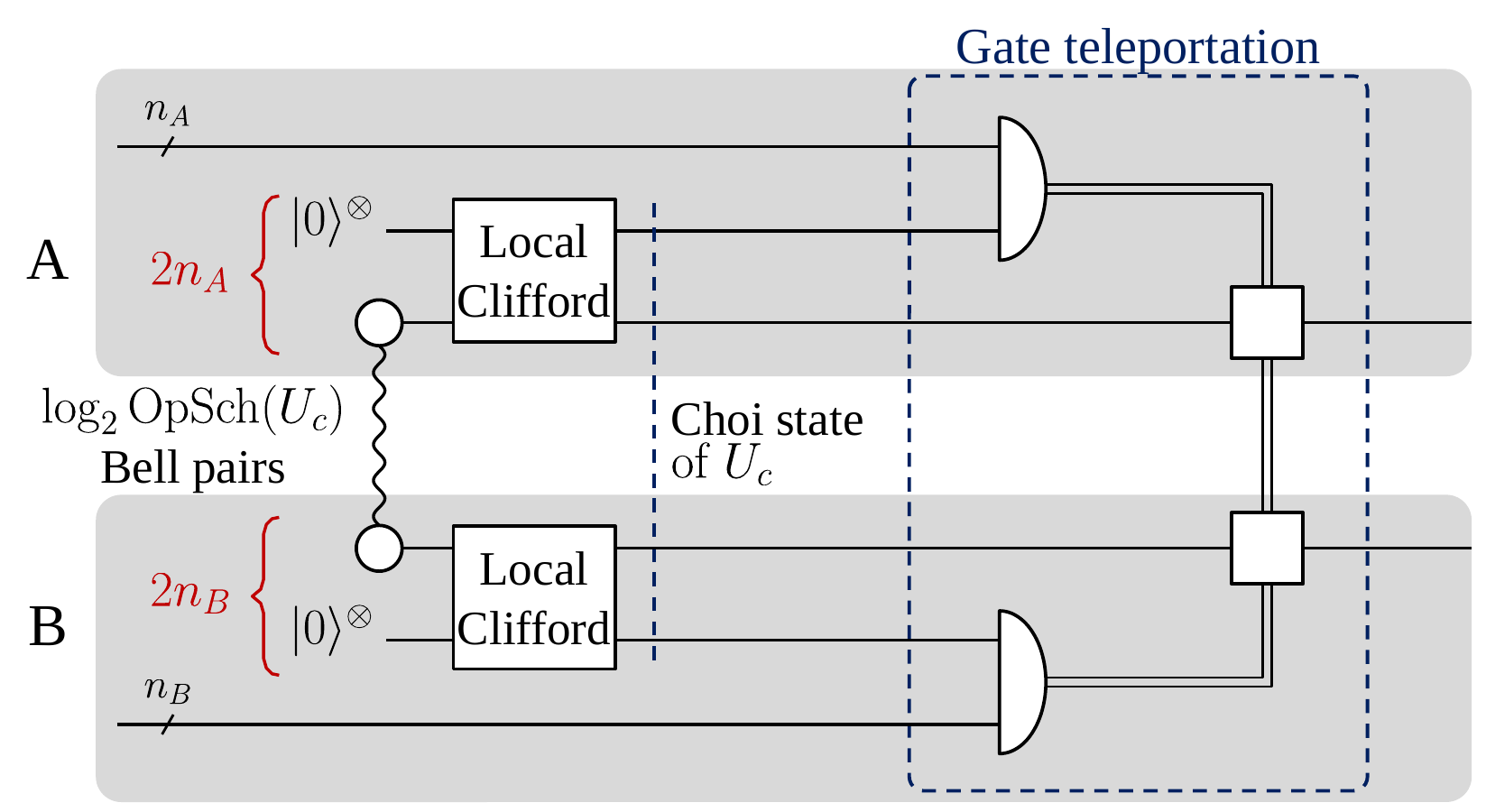}
  \caption{Implementation of a bipartite Clifford unitary $U_c$ using $\log_2\OpSch(U_c)$ Bell pairs with qubit overhead $(2n_A,2n_B)$.
  We first prepare the Choi state of $U_c$.
  For Clifford $U_c$, this can always be done by applying local Clifford operations to $\log_2\OpSch(U_c)$ Bell pairs and local $\lvert0\rangle$ states.
  The dashed box shows gate teleportation: each two-input measurement symbol represents $n_A$ or $n_B$ pairwise Bell-basis measurements on the original input and part of the Choi state, and each square represents a local Pauli correction.
  Double lines distribute both parties' measurement outcomes to both corrections.
  The two corrected registers jointly contain $U_c\lvert\psi\rangle_{AB}$ for input state $\lvert\psi\rangle_{AB}$.
  Red braces indicate the qubit overhead.}
  \label{fig:clifford-gate-teleportation}
\end{figure}

\subsection{Decomposition of Clifford unitaries with optimal entanglement consumption}
\label{sec:decomposition}

In this subsection, we introduce a decomposition lemma used to construct the distributed implementation of Clifford unitaries in Theorem~\ref{thm:main}.

Call a Clifford unitary an \emph{elementary block} if it can be obtained by applying local Clifford gates before and after one of
\begin{equation}
  \mathbb{I}_A\otimes \mathbb{I}_B,
  \qquad \CZ_{AB},
  \qquad \SWAP_{AB},
  \label{eq:elementary-blocks}
\end{equation}
where the last two operations act on one selected qubit at each party and as the identity elsewhere.
The values of $\log_2\OpSch$ for these three types are $0$, $1$, and $2$, respectively.

The distributed implementation of the identity requires neither shared entanglement nor auxiliary qubits.
A nonlocal CZ has an exact deterministic implementation using one Bell pair with qubit overhead $(1,1)$~\cite{eisert2000}, as shown in Fig.~\ref{fig:cz-implementation}.
A SWAP has an exact deterministic implementation using two Bell pairs with qubit overhead $(1,2)$, as shown in Fig.~\ref{fig:swap-implementation}.
The local Clifford gates applied before and after the operations in Eq.~\eqref{eq:elementary-blocks} require neither shared entanglement nor auxiliary qubits.
Thus every elementary block $G$ admits an exact deterministic implementation consuming $\log_2\OpSch(G)$ Bell pairs, with qubit overhead bounded by
\begin{equation}
  \begin{cases}
    (0,0),&G\sim_{\mathrm{LC}}\mathbb{I}_A\otimes\mathbb{I}_B,\\
    (1,1),&G\sim_{\mathrm{LC}}\CZ_{AB},\\
    (1,2),&G\sim_{\mathrm{LC}}\SWAP_{AB}.
  \end{cases}
  \label{eq:elementary-overhead-bounds}
\end{equation}
Here, $G\sim_{\mathrm{LC}}H$ means that $G$ can be obtained from $H$ by applying local Clifford gates before and after it.

\begin{figure}[tbp]
  \centering
  \includegraphics[height=0.356\columnwidth]{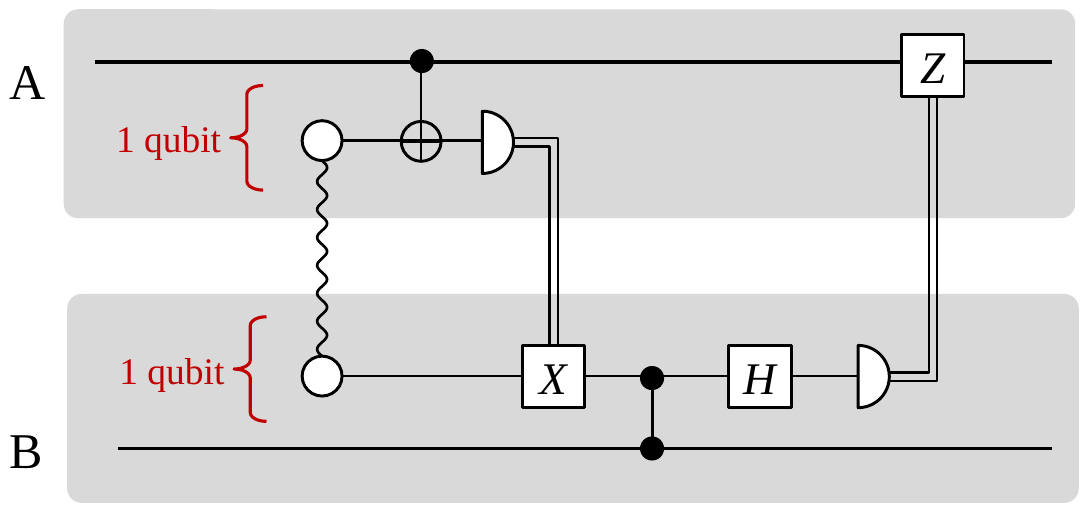}
  \caption{Distributed implementation of CZ using one Bell pair with qubit overhead $(1,1)$.
  Measurements are in the computational basis, and double lines carry their outcomes to the corresponding $X$ and $Z$ corrections.
  Red braces indicate the qubit overhead.}
  \label{fig:cz-implementation}
\end{figure}

\par
To implement a given bipartite Clifford unitary, we can decompose it into elementary blocks and implement each block separately.
However, an arbitrary decomposition of the given unitary into elementary blocks does not necessarily yield an implementation with minimum entanglement cost.
The following lemma guarantees the existence of a decomposition that achieves this minimum.

\begin{lemma}
\label{lem:unitary-decomposition}
For every bipartite Clifford unitary $U_c\in\cU(\cH_A\otimes\cH_B)$, there are elementary blocks $G_1,\ldots,G_N\in\cU(\cH_A\otimes\cH_B)$ and a phase $e^{i\theta}$ such that
\begin{equation}
  U_c=e^{i\theta}G_1G_2\cdots G_N
  \label{eq:unitary-factorization}
\end{equation}
and
\begin{equation}
  \sum_{j=1}^{N}\log_2\OpSch(G_j)
  =\log_2\OpSch(U_c).
  \label{eq:additive-cost}
\end{equation}
\end{lemma}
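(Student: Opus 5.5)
We will work in the symplectic (Heisenberg-picture) representation. Up to a phase and a Pauli correction, a Clifford unitary $U_c$ is described by a matrix $S\in\Sp(2n_A+2n_B,\F_2)$. We write it in blocks
\[
S=\begin{pmatrix}S_{AA}&S_{AB}\\ S_{BA}&S_{BB}\end{pmatrix}
\]
with respect to $\F_2^{2n_A}\oplus\F_2^{2n_B}$. Pauli corrections are local, so they can be absorbed into a neighbouring block.

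The first step is a rank formula. We will show that $\log_2\OpSch(U_c)=\rank S_{AB}$, and also that $\rank S_{AB}=\rank S_{BA}$. The plan for the formula is to expand $U_c$ in the Pauli basis. The coefficient space of the operator Schmidt decomposition is spanned by Paulis whose support structure is governed by how $B$-Paulis are mapped into the $A$-register. Equivalently, one can compute the entanglement of the Choi stabilizer state across $AA'|BB'$ using the standard formula for stabilizer states (as in \cite{fattal2004}), and reduce it to the rank of the off-diagonal block. The equality of the two ranks should follow from $S^{T}\Omega S=\Omega$ applied to $S^{-1}=\Omega^{-1}S^{T}\Omega$, together with $\OpSch(U)=\OpSch(U^\dagger)$.

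Two easy facts then follow. First, multiplying $U_c$ on either side by local Cliffords replaces $S_{AB}$ by $L\,S_{AB}\,M$ with $L\in\Sp(2n_A)$ and $M\in\Sp(2n_B)$. Second, $\OpSch$ is submultiplicative. Hence, for any factorization into elementary blocks we have $\sum_j\log_2\OpSch(G_j)\ge\log_2\OpSch(U_c)$. It therefore suffices to construct a factorization that attains equality. We do this by induction on $r=\rank S_{AB}$.

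For the base case $r=0$, we get $S_{AB}=0$ and hence $S_{BA}=0$, so $S=S_{AA}\oplus S_{BB}$. Then $U_c$ equals a local Clifford times a Pauli times a phase. This is an elementary block of identity type, with the phase $e^{i\theta}$ taken out.

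For the inductive step with $r>0$, we look for an elementary block $G$ with $\log_2\OpSch(G)=k\in\{1,2\}$ such that $\rank S_{AB}(G^{-1}U_c)=r-k$. The idea is to examine the image $W=\operatorname{im}S_{AB}\subseteq\F_2^{2n_A}$ together with the symplectic form restricted to it. The restricted form on $W$, and similarly on the relevant part of $B$, is the invariant that distinguishes the two cases.

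If $W$ is not totally non-degenerate, meaning it contains a vector in its own radical, we use local symplectic maps to move that vector to $Z_1$ on Alice's side. We then choose the matching $B$-direction so that left-multiplying by a $\CZ$ on that pair cancels exactly one rank of the off-diagonal block. Otherwise $W$ contains a hyperbolic pair, which we normalize to $(X_1,Z_1)$. A $\SWAP$ on the first qubits, conjugated by suitable local Cliffords, then removes a rank-$2$ piece.

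The main obstacle is this normal-form step. We must show that, with simultaneous local symplectic actions on both sides, the relevant row of $S_{AB}$ and the corresponding column structure can be brought into a form where $\CZ$ or $\SWAP$ exactly cancels it. In particular, multiplying by the elementary block must not create new cross terms elsewhere in $S_{AB}$. I would first try to prove a canonical form $L S_{AB}M=\operatorname{diag}(\text{hyperbolic blocks},\text{isotropic }1\text{'s},0)$. I would derive it from a Witt-type decomposition of $\operatorname{im}S_{AB}$ and $\operatorname{im}S_{BA}^{T}$, using the symplectic constraint to match the two sides. Once this is in place, iterating the step terminates after at most $r$ blocks, and the costs add up to exactly $r=\log_2\OpSch(U_c)$, which is \eqref{eq:additive-cost}.
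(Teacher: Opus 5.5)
Your rank formula, the lower bound from submultiplicativity, and the base case are fine, and they follow the paper's outline. The paper tracks $\rank S^{BA}$ rather than your $S_{AB}$, but the two ranks agree. The gap is in the inductive step. The symplectic type of $\operatorname{im}S_{AB}$ is not the invariant that decides between $\CZ$ and $\SWAP$, and your $\SWAP$ branch is false.

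Take $n_A=n_B=1$ and $U=\mathrm{CNOT}_{ab}\mathrm{CNOT}_{ba}$. Its symplectic map is
\[
x_a\mapsto x_a+x_b,\qquad z_a\mapsto z_b,\qquad x_b\mapsto x_a,\qquad z_b\mapsto z_a+z_b .
\]
So $\operatorname{im}S_{AB}=V_A$ is a hyperbolic plane and $r=2$. Your recipe would give a $\SWAP$-type $G$ with $G^{-1}U$ local, i.e.\ $U$ locally equivalent to $\SWAP$. That is impossible. The splitting $S(V_A)=(S(V_A)\cap V_A)\oplus(S(V_A)\cap V_B)$ is preserved by local Cliffords on either side and holds for every $\SWAP$-type block. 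Here, however, $S(V_A)=\operatorname{span}\{x_a+x_b,z_b\}$ meets $V_A$ trivially and $V_B$ only in $\operatorname{span}\{z_b\}$. This is the familiar fact that $\SWAP$ needs three CNOTs while this gate uses two. The optimal decomposition of this $U$ uses two $\CZ$-type blocks, even though $\operatorname{im}S_{AB}$ has zero radical. In general $\operatorname{im}S_{AB}=(S(V_A)\cap V_A)^{\perp_A}$. Your test therefore only checks that $S(V_A)\cap V_A$ is nondegenerate, which is necessary but not sufficient for the splitting.

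The deeper issue is that no normal form of the single block $S_{AB}$ under $S_{AB}\mapsto LS_{AB}M$ can decide the step. After left multiplication, $(R^{-1}S)_{AB}=(R^{-1})_{AA}S_{AB}+(R^{-1})_{AB}S_{BB}$ involves $S_{BB}$. The new cross terms you worry about are governed by where the whole subspace $S(V_A)$ (equivalently $S(V_B)=S(V_A)^{\perp}$) sits inside $V_A\oplus V_B$.

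The paper works directly with that subspace. Set $W=S(V_A)$, $W_A=W\cap V_A$, $W_B=W\cap V_B$. It shows $\rank S^{BA}=\dim V_A-\dim W_A$. Hence a factor $R$ lowers the rank by $\rank R^{BA}$ if and only if $\dim(R^{-1}(W)\cap V_A)\ge\dim W_A+\rank R^{BA}$.

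\begin{itemize}
\item If $W\neq W_A\oplus W_B$, pick $a+b\in W$ outside $W_A\oplus W_B$ and $q\in W_A^{\perp_A}$ with $[a,q]=1$. The $\CZ$-type map built from $(q,b)$ fixes $W_A$ and sends $a\mapsto a+b$.
\item If $W=W_A\oplus W_B$, both summands are nondegenerate. Exchanging a hyperbolic pair in $W_A^{\perp_A}$ with one in $W_B$ gives the $\SWAP$ step.
\end{itemize}

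Your degenerate branch is salvageable: a nonzero radical in $\operatorname{im}S_{AB}$ forces the non-split case, so a $\CZ$ step exists there. Your nondegenerate branch, however, must be replaced by this splitting criterion.
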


\noindent\textit{Proof Sketch.}
We construct the decomposition recursively, starting from $U_0:=U_c$.
Whenever $U_{j-1}$ is nonlocal, we choose a CZ-type or SWAP-type elementary block $G_j$ such that the decomposition $U_{j-1}=G_jU_j$ satisfies
\begin{equation}
  \begin{aligned}
    \log_2\OpSch(U_j)
    &=\log_2\OpSch(U_{j-1})\\
    &\quad{}-\log_2\OpSch(G_j).
  \end{aligned}
  \label{eq:sketch-cost-reduction}
\end{equation}
To find a block $G_j$ satisfying Eq.~\eqref{eq:sketch-cost-reduction}, we use the binary symplectic transformation $S_{j-1}$ corresponding to $U_{j-1}$, as described in Appendix~\ref{app:symplectic}.
In particular, we construct a symplectic transformation $R_j$ corresponding to a CZ-type or SWAP-type elementary block $G_j$ such that
\begin{equation}
  \begin{aligned}
    S_{j-1}&=R_jS_j,\\
    \rank S_j^{BA}&=\rank S_{j-1}^{BA}-\rank R_j^{BA}.
  \end{aligned}
  \label{eq:sketch-symplectic-reduction}
\end{equation}
Here, $S_j$ corresponds to $U_j$, and $\rank R_j^{BA}$ is one or two for a CZ-type or SWAP-type block, respectively.
By Lemma~\ref{lem:opsch-rank} in Appendix~\ref{app:symplectic},
\begin{equation}
  \log_2\OpSch(U_{j-1})=\rank S_{j-1}^{BA}.
  \label{eq:sketch-opsch-rank}
\end{equation}
Applying this identity also to $U_j$ and $G_j$ gives Eq.~\eqref{eq:sketch-cost-reduction} from Eq.~\eqref{eq:sketch-symplectic-reduction}.
The construction and full proof are given in Appendix~\ref{app:decomposition}.

\begin{figure}[tbp]
  \centering
  \includegraphics[height=0.35\columnwidth]{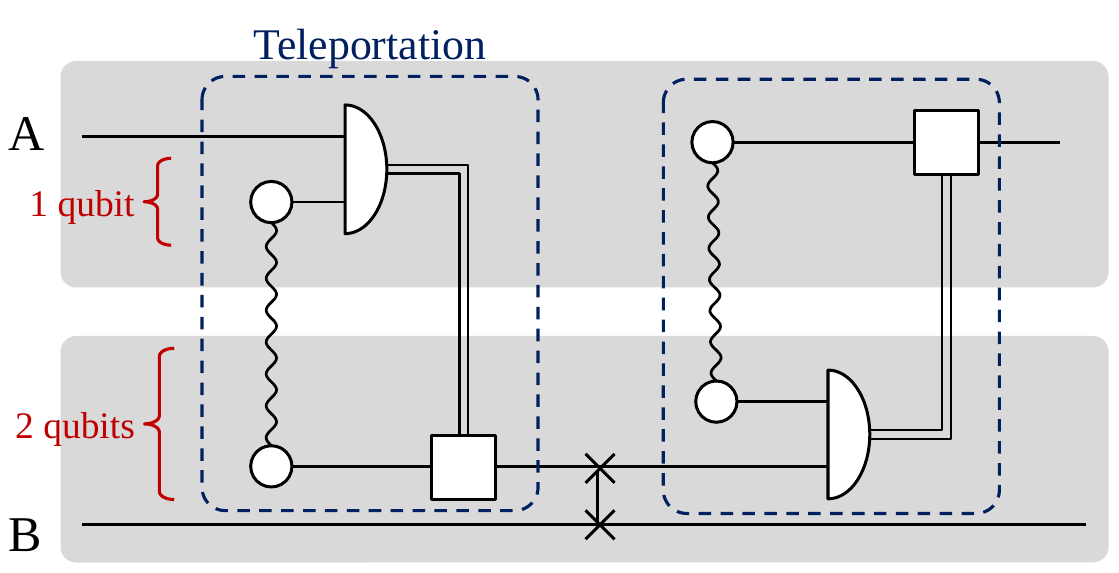}
  \caption{Distributed implementation of SWAP using two Bell pairs with qubit overhead $(1,2)$.
  The left and right dashed boxes show teleportation from Alice to Bob and from Bob to Alice, respectively.
  Each two-input measurement symbol abbreviates a Bell-basis measurement: a local CNOT gate and a Hadamard gate followed by two computational-basis measurements.
  Double lines carry the measurement outcomes to the squares representing Pauli corrections.
  Between the teleportations, Bob locally swaps the received state with his original input state, so the latter is teleported to Alice.
  Red braces indicate the qubit overhead.}
  \label{fig:swap-implementation}
\end{figure}

\subsection{Distributed implementation of Clifford unitaries with constant qubit overhead}
\label{sec:protocol}

We now combine the decomposition in Lemma~\ref{lem:unitary-decomposition} with the implementations of elementary blocks to obtain our first main result: every bipartite Clifford unitary admits a distributed implementation with minimum entanglement cost and constant qubit overhead.

\begin{theorem}
\label{thm:main}
Let $U_c\in\cU(\cH_A\otimes\cH_B)$ be a bipartite Clifford unitary.
There is an entanglement-assisted LOCC implementation of $U_c$ with qubit overhead at most $(2,2)$ that consumes exactly $\log_2\OpSch(U_c)$ Bell pairs.
\end{theorem}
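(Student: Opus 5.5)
The proof combines Lemma~\ref{lem:unitary-decomposition} with the elementary-block implementations summarized in Eq.~\eqref{eq:elementary-overhead-bounds}, executed one block at a time while reusing auxiliary qubits. First, apply Lemma~\ref{lem:unitary-decomposition} to write $U_c=e^{i\theta}G_1G_2\cdots G_N$ with $\sum_j\log_2\OpSch(G_j)=\log_2\OpSch(U_c)$. The global phase $e^{i\theta}$ has no effect on the channel in Eq.~\eqref{eq:exact-deterministic-branch-channel}, since it cancels in $U(\cdot)U^\dagger$, so it can be dropped. The protocol then implements $G_N$ first, then $G_{N-1}$, and so on up to $G_1$. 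Each block is realized by the corresponding exact deterministic protocol: identity type with no resources, CZ type via Fig.~\ref{fig:cz-implementation}, and SWAP type via Fig.~\ref{fig:swap-implementation}. The local Clifford dressings are absorbed into the local unitaries $V^{(r)}$ and $W^{(r)}$ of adjacent rounds.

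The entanglement count is immediate. Block $G_j$ consumes exactly $\log_2\OpSch(G_j)$ Bell pairs, so the total is $\sum_j\log_2\OpSch(G_j)=\log_2\OpSch(U_c)$ by Eq.~\eqref{eq:additive-cost}.

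Next, verify that sequential composition fits Definition~\ref{def:space-overhead}. Each block protocol is a finite-round protocol whose Kraus operators, once the Pauli corrections are applied, sum to the block channel $G_j(\cdot)G_j^\dagger$. Concatenating the rounds of successive block protocols therefore yields a finite-round protocol whose branch operators compose block by block. Because each block's channel is exactly unitary, the overall channel is $G_1\cdots G_N(\cdot)(G_1\cdots G_N)^\dagger=U_c(\cdot)U_c^\dagger$. Later rounds' unitaries may depend on earlier outcomes, which the definition allows through $h_{\le r-1}$.

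For the overhead, note that at the end of each block protocol every auxiliary qubit has been measured, so it can be reset to $\lvert0\rangle$ or reused as a fresh Bell-pair half in the next round. The auxiliary qubits available at any time are thus bounded by the maximum over blocks of the per-block overhead, which by Eq.~\eqref{eq:elementary-overhead-bounds} is at most $(1,2)\le(2,2)$. When the roles of Alice and Bob in the SWAP protocol are mirrored, the bound becomes $(2,1)$, still within $(2,2)$.

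The step that needs the most care is matching the figures' protocols to the rigid round structure of Definition~\ref{def:space-overhead}. In particular, the number of freshly prepared qubits before each round must equal the number measured in the preceding round, while the register size stays fixed at $n_A+m_A$ and $n_B+m_B$. I would handle this by fixing $m_A=m_B=2$ throughout and padding unused auxiliary slots with idle $\lvert0\rangle$ qubits. These idle qubits are measured and re-prepared trivially whenever the bookkeeping requires it, which does not alter the channel.

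A second subtlety is that in the SWAP protocol Bob teleports his original qubit to Alice's Bell-pair half, so Alice's input qubit now physically resides on an auxiliary wire. This is resolved by a local relabeling, or a local SWAP absorbed into $V^{(r)}$, followed by measuring and resetting the freed qubit before the next block begins.
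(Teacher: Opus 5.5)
Your proposal is correct and follows essentially the same route as the paper. You take the decomposition from Lemma~\ref{lem:unitary-decomposition}, drop the global phase, implement $G_N,\ldots,G_1$ sequentially with the CZ and SWAP protocols, and reset and reuse the measured auxiliary qubits so that the overhead never exceeds the per-block maximum $(1,2)\le(2,2)$. Your extra bookkeeping is correct and makes explicit details the paper's proof leaves implicit: padding idle $\lvert0\rangle$ qubits to fit the round structure of Definition~\ref{def:space-overhead}, and relabeling Alice's output wire after the SWAP teleportation.
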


\begin{proof}
Choose the decomposition $U_c=e^{i\theta}G_1\cdots G_N$ from Lemma~\ref{lem:unitary-decomposition}.
Each $G_j$ admits an exact deterministic implementation consuming $\log_2\OpSch(G_j)$ Bell pairs and using at most two auxiliary qubits at each party, by Eq.~\eqref{eq:elementary-overhead-bounds}.

Implement $G_N,\ldots,G_1$ in this order, with the rightmost factor acting first.
After implementing each elementary block, the output state is stored in $n_A$ qubits at Alice and $n_B$ qubits at Bob.
Use these output qubits as the input qubits for the next elementary block, and reset the measured qubits before implementing it.
The same auxiliary qubits can therefore be reused across elementary blocks, giving qubit overhead at most $(2,2)$.
The global phase $e^{i\theta}$ does not change the output density operator.
The resulting protocol implements $U_c$ exactly and deterministically and consumes a total of
\begin{equation}
  \sum_{j=1}^{N}\log_2\OpSch(G_j)=\log_2\OpSch(U_c)
\end{equation}
Bell pairs, where the equality follows from Lemma~\ref{lem:unitary-decomposition}.
\end{proof}

\subsection{Example and comparison with previous methods}
\label{sec:clifford-example}

We demonstrate how Theorem~\ref{thm:main} yields a distributed implementation of a Clifford unitary with optimal entanglement cost.
In this example, our construction consumes fewer Bell pairs than the two previous methods compared in Table~\ref{tab:hidden-swap-comparison}.
Let $U$ be the four-qubit Clifford unitary defined by the circuit in Fig.~\ref{fig:hidden-swap-input}, with Alice holding qubits $a_1,a_2$ and Bob holding qubits $b_1,b_2$.

\begin{figure}[tbp]
  \centering
  \includegraphics[width=0.95\columnwidth]{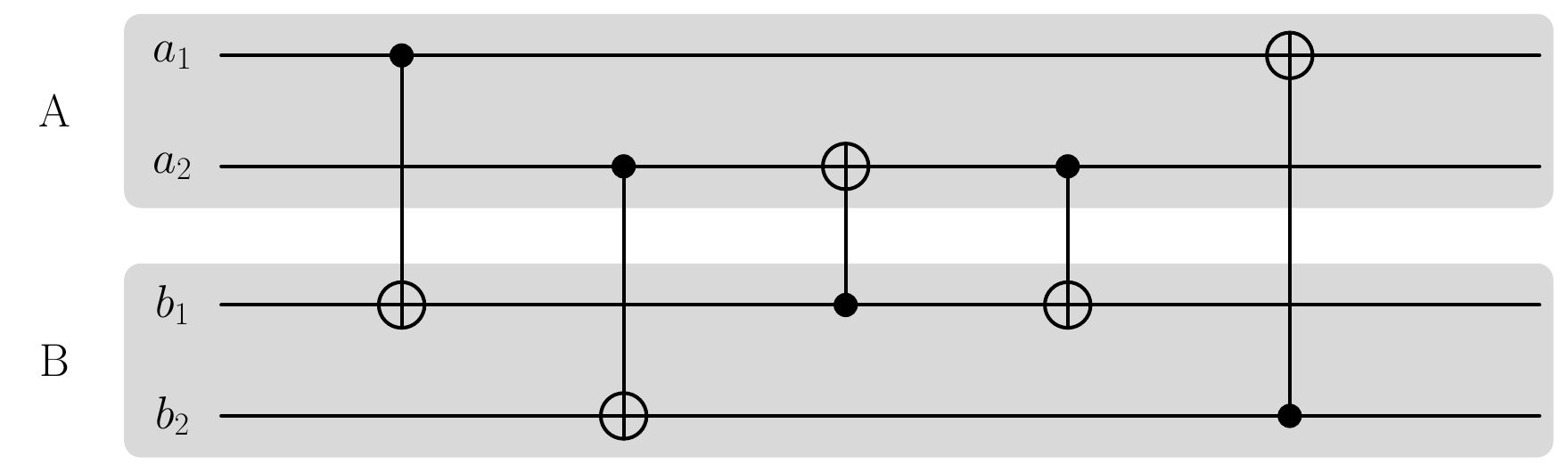}
  \caption{Quantum circuit defining the Clifford unitary $U$ used to demonstrate how Theorem~\ref{thm:main} yields a distributed implementation with optimal entanglement cost.
  Alice holds qubits $a_1,a_2$, and Bob holds qubits $b_1,b_2$.}
  \label{fig:hidden-swap-input}
\end{figure}

Applying Lemma~\ref{lem:unitary-decomposition} to $U$ yields the decomposition into three elementary blocks shown in Fig.~\ref{fig:hidden-swap-decomposition}:
\begin{equation}
  \begin{gathered}
    U=G_1G_2G_3,\\
    \left\{
    \begin{aligned}
      G_1&=D\CZ_{a_1b_2}D^\dagger,\\
      G_2&=D\SWAP_{a_2b_1}D^\dagger,\\
      G_3&=F,
    \end{aligned}
    \right.
  \end{gathered}
  \label{eq:hidden-swap-blocks}
\end{equation}
where $D$ and $F$ are the local Clifford unitaries
\begin{equation}
  \begin{aligned}
    D&=\bigl(\mathrm{CNOT}_{a_1a_2}H_{a_1}\bigr)
       \otimes\mathrm{CNOT}_{b_1b_2},\\
    F&=\bigl(\mathrm{CNOT}_{a_1a_2}\mathrm{CNOT}_{a_2a_1}\bigr)
       \otimes\bigl(\mathrm{CNOT}_{b_1b_2}\mathrm{CNOT}_{b_2b_1}\bigr).
  \end{aligned}
  \label{eq:hidden-swap-local-factors}
\end{equation}
Here, $\mathrm{CNOT}_{uv}$ denotes the controlled-NOT gate with control qubit $u$ and target qubit $v$.
The detailed derivation of this decomposition is given in Appendix~\ref{app:hidden-swap-derivation}.
Here, $G_1$ is a CZ-type elementary block, $G_2$ is a SWAP-type elementary block, and $G_3$ is an identity-type elementary block.
Following Theorem~\ref{thm:main}, we first implement $G_3$ locally, then $G_2$ using two Bell pairs, and finally $G_1$ using one Bell pair.
The resulting implementation achieves the optimal entanglement cost of three ebits, with qubit overhead at most $(2,2)$.

\begin{figure}[tbp]
  \centering
  \includegraphics[width=0.98\columnwidth]{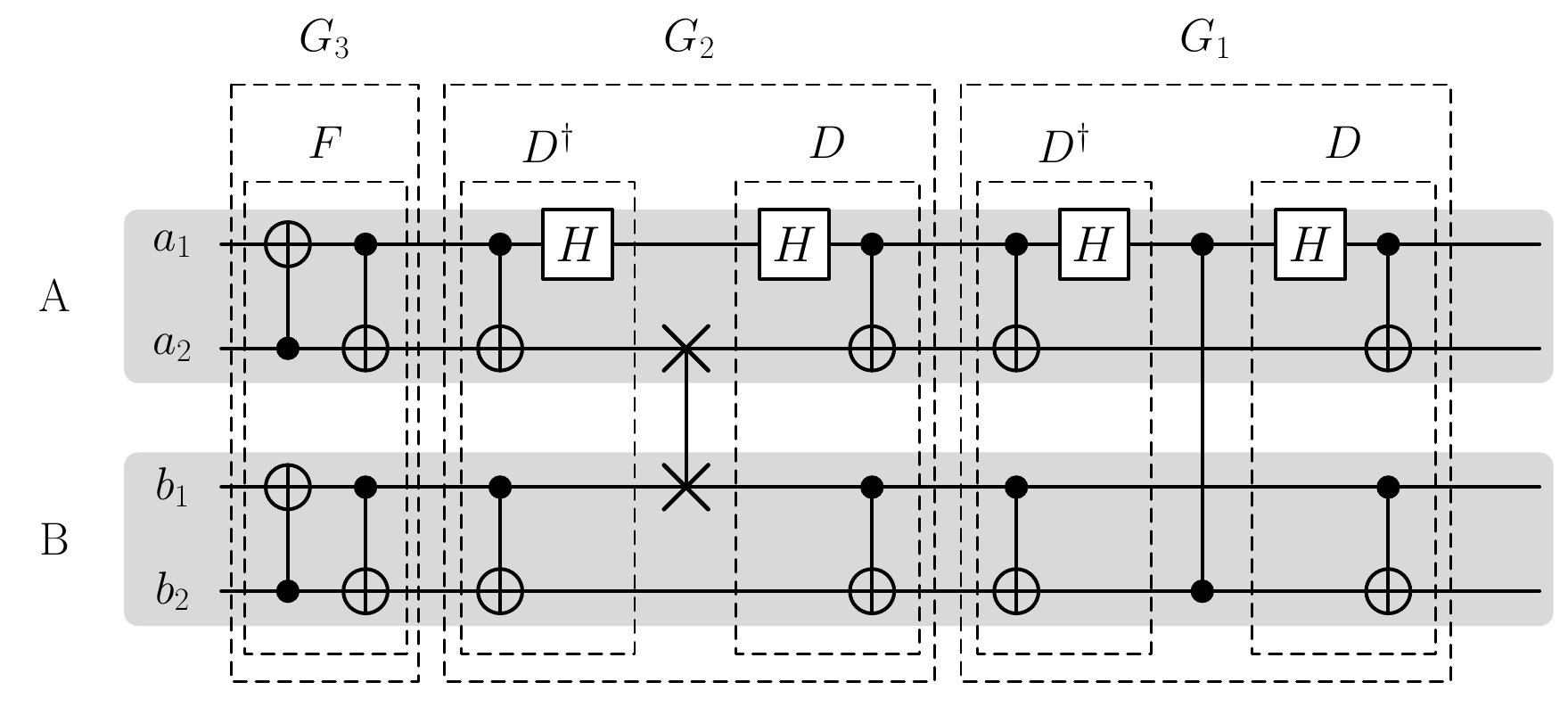}
  \caption{Decomposition of the unitary $U$ in Fig.~\ref{fig:hidden-swap-input} into three elementary blocks obtained by applying Lemma~\ref{lem:unitary-decomposition}.
  Implementing $G_3$, $G_2$, and $G_1$ in this order consumes three ebits.}
  \label{fig:hidden-swap-decomposition}
\end{figure}

Table~\ref{tab:hidden-swap-comparison} compares this entanglement cost with those obtained using two previous distributed implementation methods.
For both methods, we keep the qubit allocation fixed and convert each CNOT gate using $\mathrm{CNOT}_{uv}=H_v\CZ_{uv}H_v$.
In the hypergraph method of Andr\'{e}s-Mart\'{i}nez and Heunen~\cite{andresmartinez2019}, the intervening Hadamard gates separate the five CZ gates into groups that each consume one Bell pair.
The packing method of Wu et al.~\cite{wu2023} allows the second and fourth CZ gates to share one Bell pair by embedding the intervening third gate, giving a total cost of four Bell pairs.
The gate grouping and calculation of the Bell-pair consumption are detailed in Appendix~\ref{app:hidden-swap-derivation}.

\begin{table}[tbp]
  \caption{Entanglement costs for implementing the Clifford unitary in Fig.~\ref{fig:hidden-swap-input} with a fixed allocation of qubits $a_1,a_2$ to Alice and qubits $b_1,b_2$ to Bob.
  To calculate the Bell-pair consumption for the first two methods, we convert each CNOT gate in the original circuit using $\mathrm{CNOT}_{uv}=H_v\CZ_{uv}H_v$.
  See Appendix~\ref{app:hidden-swap-derivation} for details.}
  \label{tab:hidden-swap-comparison}
  \begin{ruledtabular}
    \begin{tabular}{lc}
      \shortstack[l]{Distributed implementation\\method} & \shortstack{Number of Bell\\pairs consumed} \\
      \hline
      Andr\'{e}s-Mart\'{i}nez and Heunen~\cite{andresmartinez2019} & 5 \\
      Wu et al.~\cite{wu2023} & 4 \\
      This work (Theorem~\ref{thm:main}) & 3 (optimal)
    \end{tabular}
  \end{ruledtabular}
\end{table}

\section{Distributed implementation of unitaries with Clifford+\texorpdfstring{$T$}{T} decompositions}
\label{sec:clifford-t}

We now consider unitaries specified by exact Clifford+$T$ decompositions, where the $T$ gate is defined by $T:=\operatorname{diag}(1,e^{i\pi/4})$ in the computational basis.
Building on Theorem~\ref{thm:main}, we construct entanglement-assisted LOCC implementations of these unitaries and derive an upper bound on their entanglement cost.
These implementations retain qubit overhead at most $(2,2)$, and their Bell-pair cost is bounded in terms of the operator Schmidt rank of the target unitary and the $T$-count of the given decomposition.

\begin{theorem}
\label{thm:clifford-t-constant-space}
Suppose that $U\in\cU(\cH_A\otimes\cH_B)$ has the following Clifford+$T$ decomposition containing $t\in\mathbb{Z}_{\geq0}$ $T$ gates:
\begin{equation}
  U=e^{i\phi}C_tT_{q_t}C_{t-1}T_{q_{t-1}}\cdots C_1T_{q_1}C_0,
  \label{eq:exact-Clifford-T-decomposition}
\end{equation}
where $\phi\in\mathbb{R}$.
For $j=0,\ldots,t$, each $C_j\in\cU(\cH_A\otimes\cH_B)$ is a Clifford unitary.
For $j=1,\ldots,t$, the qubit $q_j$ belongs to Alice or Bob, and $T_{q_j}$ acts as $T$ on that qubit and as the identity elsewhere.
There is an entanglement-assisted LOCC implementation of $U$ with qubit overhead at most $(2,2)$ that consumes $K\in\mathbb{Z}_{\geq0}$ Bell pairs satisfying
\begin{equation}
  K\leq\left\lfloor\log_2\OpSch(U)\right\rfloor+2t.
  \label{eq:clifford-t-constant-space-cost}
\end{equation}
\end{theorem}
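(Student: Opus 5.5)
The plan is to move all $T$ gates to the end of the circuit, where they become Pauli rotations, and to leave a single Clifford unitary at the front. That Clifford is implemented optimally by Theorem~\ref{thm:main}, and each Pauli rotation costs at most one Bell pair. A sub-multiplicativity argument for the operator Schmidt rank then controls the cost of the Clifford part.

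\emph{Step 1 (rewriting).} For $j=1,\ldots,t$ set $D_j:=C_tC_{t-1}\cdots C_j$, and set $C:=C_tC_{t-1}\cdots C_0$, which is Clifford. Using $T=e^{i\pi/8}\exp(-i\tfrac{\pi}{8}Z)$, we get
\begin{equation*}
D_jT_{q_j}D_j^\dagger=e^{i\pi/8}\exp\bigl(-i\tfrac{\pi}{8}P_j\bigr),\qquad P_j:=D_jZ_{q_j}D_j^\dagger .
\end{equation*}
Here $P_j$ is a Hermitian Pauli operator, so $P_j=\pm P_j^A\otimes P_j^B$ with $P_j^A,P_j^B$ Pauli strings. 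Telescoping gives
\begin{equation*}
U=e^{i\phi'}R_tR_{t-1}\cdots R_1C,\qquad R_j:=\exp\bigl(-i\tfrac{\pi}{8}P_j\bigr).
\end{equation*}

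\emph{Step 2 (implementing the pieces).} First implement $C$ by Theorem~\ref{thm:main}. This uses $\log_2\OpSch(C)$ Bell pairs and qubit overhead at most $(2,2)$. Then implement $R_1,\ldots,R_t$ one after another, resetting and reusing the auxiliary qubits as in the proof of Theorem~\ref{thm:main}. There are two cases for each $R_j$.
\begin{itemize}
\item If $P_j^A$ or $P_j^B$ is the identity, $R_j$ is local and costs nothing.
\item Otherwise, local Cliffords map $P_j^A\mapsto Z_a$ and $P_j^B\mapsto Z_b$ on single qubits $a$ and $b$. It then remains to implement $\exp(\mp i\tfrac{\pi}{8}Z_aZ_b)$. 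I would do this with the cat-entangler/disentangler construction of~\cite{eisert2000}:
  \begin{itemize}
  \item One Bell pair plus a local CNOT and measurement copies the $Z$ value of $a$ onto an auxiliary qubit $a'$ at Bob.
  \item Bob applies the local gate $\exp(\mp i\tfrac{\pi}{8}Z_{a'}Z_b)$.
  \item Bob measures $a'$ in the $X$ basis, and Alice applies a $Z$ correction on $a$.
  \end{itemize}
  This costs one Bell pair with overhead $(1,1)$, exactly like the CZ circuit in Fig.~\ref{fig:cz-implementation}.
\end{itemize}
Since the correction is a Pauli $Z$ that commutes with the rotation, no non-Clifford correction is needed, and the gadget is exact and deterministic. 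The total overhead stays within $(2,2)$, and the total cost is $K\le\log_2\OpSch(C)+t$.

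\emph{Step 3 (bounding the Clifford cost).} Each $R_j=\cos(\tfrac{\pi}{8})\,\mathbb{I}-i\sin(\tfrac{\pi}{8})P_j$ is a sum of two product operators, so $\OpSch(R_j^\dagger)\le2$. The operator Schmidt rank is sub-multiplicative, since products of expansions give an expansion with $r_1r_2$ terms. Because $C=e^{-i\phi'}R_1^\dagger\cdots R_t^\dagger U$, this gives $\OpSch(C)\le2^t\OpSch(U)$. By Lemma~\ref{lem:opsch-rank}, $\log_2\OpSch(C)$ is an integer, so
\begin{equation*}
\log_2\OpSch(C)\le\lfloor\log_2\OpSch(U)+t\rfloor=\lfloor\log_2\OpSch(U)\rfloor+t .
\end{equation*}
Combining with Step 2 yields $K\le\lfloor\log_2\OpSch(U)\rfloor+2t$, which is Eq.~\eqref{eq:clifford-t-constant-space-cost}.

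The main point needing care is Step 2: checking that the nonlocal $ZZ$-rotation gadget is exact and deterministic with only Pauli corrections, and that its overhead fits within $(2,2)$. If the one-Bell-pair gadget caused trouble, I would fall back to $\exp(i\theta Z_aZ_b)=\mathrm{CNOT}_{ab}\,R_z(\cdot)_b\,\mathrm{CNOT}_{ab}$ with two nonlocal CZ-type blocks. That fallback costs two Bell pairs per rotation and would need the Step 3 bound sharpened by one per $T$ gate. The budget of $2t$ leaves exactly this slack only if one accepts the weaker estimate, so the one-Bell-pair gadget is the intended route.
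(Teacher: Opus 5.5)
Your proposal is correct and follows essentially the same route as the paper. Both use the rewriting $U=e^{i\phi'}R(P_t)\cdots R(P_1)C$, spend one Bell pair with overhead $(1,1)$ per nonlocal rotation via the construction of Eisert et al., and close with the same submultiplicativity-plus-integrality bound $\log_2\OpSch(C)\le\lfloor\log_2\OpSch(U)\rfloor+t$. The only cosmetic difference is that the paper conjugates by a Clifford on Alice's side alone to get a controlled-$\exp(\mp i\epsilon\pi P_B/8)$ gate, whereas you also conjugate on Bob's side to reduce to a two-qubit $ZZ$ rotation, implemented by the same cat-entangler/disentangler gadget.
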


\noindent\textit{Proof Sketch.}
For a signed Pauli operator $P\in\cU(\cH_A\otimes\cH_B)$, meaning a tensor product of $\mathbb{I},X,Y,Z$ multiplied by $+1$ or $-1$, write $R(P):=e^{-i\pi P/8}$.
Commuting all Clifford factors to the right gives
\begin{equation}
  U=e^{i\phi'}R(P_t)\cdots R(P_1)C,
  \label{eq:sketch-Pauli-rotation-form}
\end{equation}
where $C:=C_t\cdots C_0$ is a Clifford unitary and $\phi':=\phi+t\pi/8$.
For each $j$, $P_j$ is obtained by conjugating $Z_{q_j}$ by $C_t\cdots C_j$.
Each $R(P_j)$ is either local or locally Clifford-equivalent to a unitary controlled by one qubit, and hence can be implemented using at most one Bell pair with qubit overhead at most $(1,1)$.
Implementing $C$ using Theorem~\ref{thm:main} and then $R(P_1),\ldots,R(P_t)$ in this order gives the Bell-pair cost bound
\begin{equation}
  K\leq\log_2\OpSch(C)+t,
  \label{eq:sketch-rotation-cost}
\end{equation}
with qubit overhead at most $(2,2)$.
Since $C=e^{-i\phi'}R(P_1)^\dagger\cdots R(P_t)^\dagger U$, submultiplicativity of the operator Schmidt rank gives
\begin{equation}
  \OpSch(C)\leq2^t\OpSch(U).
  \label{eq:sketch-Clifford-rank-bound}
\end{equation}
Combining this bound with Eq.~\eqref{eq:sketch-rotation-cost} yields Eq.~\eqref{eq:clifford-t-constant-space-cost}.
The full proof is given in Appendix~\ref{app:clifford-t}.

\section{Discussion and conclusions}
\label{sec:conclusions}

We have studied the entanglement cost of exact deterministic implementations of bipartite unitaries with constant qubit overhead.
For every bipartite Clifford unitary, the operator Schmidt rank lower bound can be attained using at most two auxiliary qubits at each QPU.
Thus, restricting the qubit overhead to a constant does not increase the minimum entanglement cost for bipartite Clifford unitaries.
For unitaries specified by an exact Clifford+$T$ decomposition with $T$-count $t$, we have also given an implementation using at most two auxiliary qubits at each QPU, with an entanglement cost at most $2t$ Bell pairs above the operator Schmidt rank lower bound.

For non-Clifford unitaries, our upper bound on entanglement cost is not tight in general.
An important open problem is to determine the minimum entanglement cost of non-Clifford unitaries under qubit-space constraints.
However, even without qubit-space constraints, the minimum entanglement cost of non-Clifford unitaries has not been characterized in general.
Beyond determining this cost, the space-constrained setting therefore raises the additional question of whether, and by how much, qubit-space constraints increase the minimum entanglement cost of a non-Clifford unitary.

\begin{acknowledgments}
This work was supported by the Japan Society for the Promotion of Science (JSPS) KAKENHI Grant Number \mbox{23K21643}, the MEXT Quantum Leap Flagship Program (MEXT QLEAP) \mbox{JPMXS0118069605} and \mbox{JPMXS0120351339}, JST CREST Grant Number \mbox{JPMJCR25I5}, JST ASPIRE Grant Number \mbox{JPMJAP25A3}, JST NEXUS Grant Number \mbox{JPMJNX26C9}, FoPM, WINGS Program, the University of Tokyo, and IBM Quantum.
\end{acknowledgments}

\paragraph*{Data Availability.}
No new data were created or analyzed in this study.

\appendix

\section{Symplectic representation of Clifford unitaries}
\label{app:symplectic}

This appendix reviews the basics of the symplectic representation of Clifford unitaries and establishes additional lemmas for our analysis of bipartite Clifford unitaries.

\subsection{Preliminaries on symplectic spaces}

We first collect the required definitions and properties of symplectic spaces.
All vector spaces considered here are finite-dimensional and over the field $\F_2$ with two elements.

Let $V$ be a vector space equipped with a bilinear form $[\cdot,\cdot]:V\times V\to\F_2$.
We say that the form is \emph{nondegenerate} if
\begin{equation}
  \forall v\in V,\quad
  \bigl(\forall w\in V,\ [v,w]=0\bigr)\Longrightarrow v=0.
  \label{eq:nondegenerate-form}
\end{equation}
We also say that the space $V$ is \emph{nondegenerate} when its given bilinear form is nondegenerate.

A \emph{symplectic form} on a vector space $V$ is a nondegenerate alternating bilinear form.
Alternation means $[v,v]=0$ for every $v\in V$.
A vector space $V$ equipped with a symplectic form is called a \emph{symplectic space}.
A \emph{symplectic transformation} of a symplectic space $(V,[\cdot,\cdot])$ is an invertible linear map $S:V\to V$ that preserves the form, namely $[Sv,Sw]=[v,w]$ for all $v,w\in V$.
The group of these transformations is denoted by $\Sp(V)$.

Let $W$ be a subspace of a symplectic space $V$ with form $[\cdot,\cdot]$.
On $W$, we use the restriction of this form to $W\times W$.
Define the \emph{orthogonal complement} of $W$ in $V$ and the \emph{radical} of $W$ by
\begin{equation}
  W^\perp
  :=\{v\in V:[v,w]=0\ \text{for all }w\in W\},
\end{equation}
\begin{equation}
  \begin{aligned}
    \rad(W)&:=W\cap W^\perp\\
    &=\{w\in W:[w,x]=0\ \text{for all }x\in W\}.
  \end{aligned}
  \label{eq:radical-definition}
\end{equation}
Equation~\eqref{eq:radical-definition} shows that $W$ is nondegenerate if and only if $\rad(W)=\{0\}$.
Equivalently,
\begin{equation}
  W\text{ is nondegenerate}
  \quad\Longleftrightarrow\quad
  W\cap W^\perp=\{0\}.
  \label{eq:nondegenerate-subspace}
\end{equation}

\begin{lemma}
\label{lem:orthogonal-complement}
Let $V$ be a finite-dimensional symplectic space over $\F_2$, and let $W\subseteq V$ be any linear subspace.
Then
\begin{equation}
  \dim W+\dim W^\perp=\dim V,
  \label{eq:orthogonal-complement-dimension}
\end{equation}
and
\begin{equation}
  (W^\perp)^\perp=W.
  \label{eq:double-perp}
\end{equation}
If $W$ is nondegenerate, then
\begin{equation}
  V=W\oplus W^\perp,
  \label{eq:orthogonal-direct-sum}
\end{equation}
and $W^\perp$ is nondegenerate.
\end{lemma}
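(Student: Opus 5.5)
We prove the dimension formula first, then the double-perp identity, and finally the direct-sum decomposition. The only tool is linear algebra over $\F_2$, applied through the map induced by the symplectic form.

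\textbf{Step 1: the dimension formula.} Consider the linear map
\begin{equation}
  \iota:V\to V^*,\qquad v\mapsto[v,\cdot].
\end{equation}
Nondegeneracy of the form, Eq.~\eqref{eq:nondegenerate-form}, says exactly that $\ker\iota=\{0\}$. Since $\dim V^*=\dim V$ and $V$ is finite-dimensional, $\iota$ is an isomorphism.

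Now compose $\iota$ with the restriction map $\rho:V^*\to W^*$, which sends $f$ to $f|_W$. The map $\rho$ is surjective, because any linear functional on $W$ extends to $V$ by extending a basis of $W$ to a basis of $V$. Hence $\rho\circ\iota:V\to W^*$ is surjective.

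Its kernel is $\{v:[v,w]=0\ \forall w\in W\}=W^\perp$. Rank–nullity then gives
\begin{equation}
  \dim V=\dim W^\perp+\dim W^*=\dim W^\perp+\dim W,
\end{equation}
which is Eq.~\eqref{eq:orthogonal-complement-dimension}.

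\textbf{Step 2: the double-perp identity.} Take $w\in W$. For every $v\in W^\perp$ we have $[v,w]=0$, and by alternation the form is symmetric over $\F_2$, since $0=[v+w,v+w]=[v,w]+[w,v]$. Hence $[w,v]=0$ for all $v\in W^\perp$, so $W\subseteq(W^\perp)^\perp$.

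Applying Step 1 twice, once to $W$ and once to $W^\perp$, gives
\begin{equation}
  \dim(W^\perp)^\perp=\dim V-\dim W^\perp=\dim W.
\end{equation}
An inclusion of subspaces of equal finite dimension is an equality, which proves Eq.~\eqref{eq:double-perp}.

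\textbf{Step 3: the nondegenerate case.} Suppose $W$ is nondegenerate. By Eq.~\eqref{eq:nondegenerate-subspace}, $W\cap W^\perp=\{0\}$. Combined with Step 1, this gives
\begin{equation}
  \dim(W+W^\perp)=\dim W+\dim W^\perp=\dim V,
\end{equation}
so $V=W\oplus W^\perp$, which is Eq.~\eqref{eq:orthogonal-direct-sum}.

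For nondegeneracy of $W^\perp$, we apply Eq.~\eqref{eq:nondegenerate-subspace} to $W^\perp$. Its radical is $W^\perp\cap(W^\perp)^\perp$, which by Step 2 equals $W^\perp\cap W=\{0\}$. Hence $W^\perp$ is nondegenerate.

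\textbf{Where care is needed.} The only point requiring attention is the surjectivity of $\rho\circ\iota$ in Step 1, which uses that $\iota$ is an isomorphism. The other is the left/right symmetry used in Step 2, which follows from alternation. Everything else is routine dimension counting.
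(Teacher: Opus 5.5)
Your proof is correct and follows essentially the same route as the paper: the dimension formula comes from rank--nullity for the form-induced map $V\to W^*$, whose surjectivity uses extension of functionals together with the isomorphism $V\cong V^*$ from nondegeneracy. The identity $(W^\perp)^\perp=W$ then follows from inclusion plus equal dimensions, and the nondegenerate case from the radical computation. The only cosmetic difference is that you place $v$ in the left slot of $[\cdot,\cdot]$, which matches the paper's definitions of $W^\perp$ and nondegeneracy directly, so you need symmetry only in Step 2, whereas the paper invokes it from the outset.
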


\paragraph*{Proof.}
We first note that the form is symmetric over $\F_2$.
Bilinearity and alternation give
\begin{equation}
  0=[x+y,x+y]=[x,y]+[y,x]
  \qquad(x,y\in V).
\end{equation}
Since the field is $\F_2$, we have $[x,y]=[y,x]$.

\paragraph*{Proof of Eq.~\eqref{eq:orthogonal-complement-dimension}.}
Let $V^*$ and $W^*$ denote the dual spaces of $V$ and $W$, respectively.
Define the linear map $\Phi:V\to W^*$ by $\Phi(v)(w):=[w,v]$.
The rank--nullity theorem gives $\dim V=\dim\ker\Phi+\rank\Phi$.
Thus it suffices to show that $\ker\Phi=W^\perp$ and $\rank\Phi=\dim W$.

We first show that $\ker\Phi=W^\perp$.
For $v\in V$, the functional $\Phi(v)$ is zero if and only if $[w,v]=0$ for every $w\in W$.
By symmetry and the definition of $W^\perp$, this is equivalent to $v\in W^\perp$.

We next show that $\rank\Phi=\dim W$ by proving that $\Phi$ is surjective onto $W^*$.
For an arbitrary $g\in W^*$, we will find $v\in V$ such that $\Phi(v)=g$.

First, extend the linear functional $g:W\to\mathbb{F}_2$ to a linear functional $\widetilde g:V\to\mathbb{F}_2$ that agrees with $g$ on $W$.
To construct this extension, write $k=\dim W$ and $d=\dim V$.
Choose a basis $w_1,\ldots,w_k$ of $W$ and extend it to a basis $w_1,\ldots,w_k,u_{k+1},\ldots,u_d$ of $V$.
The assignments $\widetilde g(w_i):=g(w_i)$ and $\widetilde g(u_j):=0$ determine a linear functional $\widetilde g\in V^*$ satisfying $\widetilde g(w)=g(w)$ for every $w\in W$.

Next, we represent $\widetilde g$ in the form $\widetilde g(x)=[x,v]$ for some $v\in V$.
For this purpose, define the linear map $J:V\to V^*$ by $J(v)(x):=[x,v]$.
If $J(v)$ is the zero functional, then $[x,v]=0$ for every $x\in V$.
By symmetry and nondegeneracy, this implies $v=0$.
Thus $J$ is injective, and since $\dim V=\dim V^*$, it is an isomorphism.

We can therefore choose $v\in V$ such that $J(v)=\widetilde g$.
For every $w\in W$,
\begin{equation}
  \Phi(v)(w)=[w,v]=\widetilde g(w)=g(w).
\end{equation}
Thus $\Phi(v)=g$, and since $g$ was arbitrary, $\Phi$ is surjective.
Consequently, $\rank\Phi=\dim W^*=\dim W$.

Substituting the kernel and rank into the rank--nullity theorem gives
\begin{equation}
  \dim V=\dim\ker\Phi+\rank\Phi
  =\dim W^\perp+\dim W.
\end{equation}

\paragraph*{Proof of Eq.~\eqref{eq:double-perp}.}
We prove the equality by showing that $W\subseteq(W^\perp)^\perp$ and that the two spaces have the same dimension.
For $w\in W$ and $u\in W^\perp$, symmetry gives $[w,u]=[u,w]=0$.
Thus $W\subseteq(W^\perp)^\perp$.
Applying Eq.~\eqref{eq:orthogonal-complement-dimension} to the subspace $W^\perp$ yields
\begin{equation}
  \dim(W^\perp)^\perp=\dim V-\dim W^\perp=\dim W.
\end{equation}
The inclusion and equality of dimensions imply $(W^\perp)^\perp=W$.

\paragraph*{Proof of the nondegenerate case.}
Suppose that $W$ is nondegenerate.
To prove the direct-sum decomposition in Eq.~\eqref{eq:orthogonal-direct-sum}, we show that $W$ and $W^\perp$ have zero intersection and together span $V$.
Nondegeneracy gives $W\cap W^\perp=\{0\}$ by Eq.~\eqref{eq:nondegenerate-subspace}.
Together with Eq.~\eqref{eq:orthogonal-complement-dimension}, this yields
\begin{equation}
  \dim(W+W^\perp)=\dim W+\dim W^\perp=\dim V.
\end{equation}
Since $W+W^\perp\subseteq V$, this gives $W+W^\perp=V$.
The zero intersection makes the sum direct, so $V=W\oplus W^\perp$.

It remains to show that $W^\perp$ is nondegenerate, which means that its radical is zero.
Using Eq.~\eqref{eq:double-perp}, we obtain
\begin{equation}
  \rad(W^\perp)
  =W^\perp\cap(W^\perp)^\perp
  =W^\perp\cap W
  =\{0\},
\end{equation}
which proves that $W^\perp$ is nondegenerate.  \hfill$\square$

Vectors $u,v\in V$ form a \emph{symplectic pair} if $[u,v]=1$.
A basis $(u_1,v_1,\ldots,u_m,v_m)$ of $V$ is \emph{symplectic} if
\begin{equation}
  [u_j,u_k]=[v_j,v_k]=0,\qquad [u_j,v_k]=\delta_{jk}
\end{equation}
for all $1\leq j,k\leq m$.

\begin{lemma}
\label{lem:symplectic-basis-extension}
Let $V$ be a finite-dimensional symplectic space over $\F_2$, and let $(u,v)$ be a symplectic pair in $V$.
Then $V$ has a symplectic basis containing $u$ and $v$.
\end{lemma}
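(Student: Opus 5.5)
The proof goes by induction on $\dim V$, which is even because $V$ carries a symplectic form. The inductive step splits off the plane spanned by $u$ and $v$ as a nondegenerate subspace and applies the induction hypothesis to its orthogonal complement via Lemma~\ref{lem:orthogonal-complement}. To avoid circularity, we prove along the way the auxiliary claim that every nonzero symplectic space over $\F_2$ contains a symplectic pair. This follows from nondegeneracy: pick any $x\neq0$; there is some $y$ with $[x,y]\neq0$, hence $[x,y]=1$.

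First we check that $W:=\operatorname{span}\{u,v\}$ is nondegenerate. Since $[u,v]=1$, both $u$ and $v$ are nonzero. They are also distinct, because $[u,u]=0$ by alternation. So $W$ is two-dimensional. A general element is $w=\alpha u+\beta v$. Then $[w,u]=\beta$ and $[w,v]=\alpha$, using $[u,u]=[v,v]=0$ and the symmetry established in the proof of Lemma~\ref{lem:orthogonal-complement}. If $w\in\rad(W)$, then $\alpha=\beta=0$, so $\rad(W)=\{0\}$. By Lemma~\ref{lem:orthogonal-complement}, we get $V=W\oplus W^\perp$ with $\dim W^\perp=\dim V-2$, and $W^\perp$ is nondegenerate. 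The restricted form is still alternating, so $W^\perp$ is itself a symplectic space.

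Next we run the induction, phrased as: every symplectic space has a symplectic basis, and any prescribed symplectic pair can be taken as its first pair. If $W^\perp=\{0\}$, then $(u,v)$ is already a symplectic basis of $V$. Otherwise, the auxiliary claim gives a symplectic pair in $W^\perp$, and the induction hypothesis applied to the smaller space $W^\perp$ gives a symplectic basis $(u_2,v_2,\ldots,u_m,v_m)$ of $W^\perp$. Set $u_1=u$ and $v_1=v$. The combined list is a basis of $V$ because the sum $W\oplus W^\perp$ is direct. The relations $[u_1,v_1]=1$ and $[u_1,u_1]=[v_1,v_1]=0$ hold by construction. All cross pairings between $\{u_1,v_1\}$ and the basis of $W^\perp$ vanish by the definition of $W^\perp$. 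The remaining relations hold by the induction hypothesis. Therefore $(u_1,v_1,\ldots,u_m,v_m)$ is a symplectic basis of $V$ containing $u$ and $v$.

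There is no serious obstacle in this argument. The only points needing care are that $W^\perp$ is again symplectic, which comes from the nondegeneracy clause of Lemma~\ref{lem:orthogonal-complement}, and that the induction starts correctly: the zero space has the empty symplectic basis, and a nonzero space always supplies a pair by the auxiliary claim.
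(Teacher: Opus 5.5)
Your proof is correct and takes essentially the same route as the paper: you show that $\operatorname{span}\{u,v\}$ is nondegenerate, split it off with Lemma~\ref{lem:orthogonal-complement}, and recurse on the nondegenerate complement. The paper writes this recursion as an explicit iteration, choosing a nonzero vector and a partner by nondegeneracy at each step, where you use strong induction together with your auxiliary existence claim; your opening remark that $\dim V$ is even is never used and can be dropped.
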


\paragraph*{Proof.}
Set $H=\operatorname{span}\{u,v\}$.
Since $[u,v]=1$, the vectors $u$ and $v$ are linearly independent.
We first show that $H$ is nondegenerate.
For $h=\alpha u+\beta v\in H$ with $\alpha,\beta\in\F_2$, we have $[h,v]=\alpha$ and $[h,u]=\beta$.
Suppose that $h$ is orthogonal to every vector in $H$.
Since $u,v\in H$, we have $[h,u]=[h,v]=0$.
The two identities therefore give $\alpha=\beta=0$, and hence $h=0$.
Thus $H$ is nondegenerate.

Lemma~\ref{lem:orthogonal-complement} gives $V=H\oplus H^\perp$, with $H^\perp$ nondegenerate.
Set $(u_1,v_1)=(u,v)$ and $V_1=H^\perp$.
For each $j\geq1$ with $V_j\neq\{0\}$, choose $0\neq u_{j+1}\in V_j$.
Nondegeneracy of $V_j$ gives $v_{j+1}\in V_j$ with $[u_{j+1},v_{j+1}]=1$.
The subspace $\operatorname{span}\{u_{j+1},v_{j+1}\}$ is nondegenerate by the argument above.
Lemma~\ref{lem:orthogonal-complement} gives the decomposition
\begin{equation}
  V_j=\operatorname{span}\{u_{j+1},v_{j+1}\}\oplus V_{j+1},
\end{equation}
where $V_{j+1}$ is the nondegenerate orthogonal complement of $\operatorname{span}\{u_{j+1},v_{j+1}\}$ within $V_j$.
The construction can continue whenever $V_j\neq\{0\}$.
Since $\dim V_{j+1}<\dim V_j$, the process terminates after finitely many steps with $V_m=\{0\}$ for some $m$.

The resulting orthogonal direct-sum decomposition is
\begin{equation}
  V=\bigoplus_{j=1}^{m}\operatorname{span}\{u_j,v_j\}.
\end{equation}
Thus $(u_1,v_1,\ldots,u_m,v_m)$ is a basis of $V$.
Each pair satisfies $[u_j,v_j]=1$, and distinct pairs are mutually orthogonal by construction.
Hence this basis is symplectic and contains $u$ and $v$.  \hfill$\square$

\subsection{Lemmas on bipartite Clifford unitaries}
\label{app:pauli-clifford}

We now use this framework to represent Pauli operators and Clifford unitaries.

For each party $X\in\{A,B\}$, we represent Pauli operators up to phase by vectors in $V_X=\F_2^{2n_X}$.
For $p=(x,z)$ with $x,z\in\F_2^n$, write $P(p)$ for the $n$-qubit Pauli operator proportional to $\bigotimes_{j=1}^n X_j^{x_j}Z_j^{z_j}$.
The commutation relations of Pauli operators are determined by the symplectic form
\begin{equation}
  [p,q]:=x\mathbin{\cdot}z'+z\mathbin{\cdot}x'
  \pmod{2},
  \label{eq:symplectic-form}
\end{equation}
where $q=(x',z')$.
The operators $P(p)$ and $P(q)$ commute if $[p,q]=0$ and anticommute if $[p,q]=1$.
This bilinear form is alternating and nondegenerate, so each $V_X$ is a symplectic space.
We equip $V=V_A\oplus V_B$ with the orthogonal direct-sum form, making $V$ a symplectic space as well.
We write $\perp_A$ and $\perp_B$ for orthogonal complements taken within $V_A$ and $V_B$, respectively.

A Clifford unitary $U_c$ maps Pauli operators to Pauli operators under conjugation.
In the binary representation, this action is described by an invertible linear transformation $S:V\to V$ satisfying
\begin{equation}
  U_c P(p) U_c^\dagger\propto P(Sp),\qquad \forall p\in V.
\end{equation}
Since conjugation preserves commutation relations, $S$ preserves the symplectic form and is therefore a symplectic transformation, i.e., $S\in\Sp(V)$.
Relative to $V_A\oplus V_B$, write
\begin{equation}
  S=
  \begin{pmatrix}
    S^{AA}&S^{AB}\\
    S^{BA}&S^{BB}
  \end{pmatrix}.
  \label{eq:block-S}
\end{equation}
Here, $S^{YX}:V_X\to V_Y$ for $X,Y\in\{A,B\}$, and all matrix ranks below are over $\F_2$.

\begin{lemma}
\label{lem:block-diagonal-symplectic}
Let $U_c\in\cU(\cH_A\otimes\cH_B)$ be a Clifford unitary with symplectic transformation $S\in\Sp(V_A\oplus V_B)$.
If $S^{BA}=0$, then $U_c=U_A\otimes U_B$ for Clifford unitaries $U_A\in\cU(\cH_A)$ and $U_B\in\cU(\cH_B)$.
\end{lemma}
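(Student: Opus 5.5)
The plan is to show first that $S^{BA}=0$ forces $S^{AB}=0$ as well, so that $S$ splits into two local symplectic blocks. I then realize each block by a local Clifford unitary, and finally show that the remaining discrepancy is a Pauli operator times a phase, which is itself local.

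\emph{Step 1: $S$ is block diagonal.} Since $S^{BA}=0$, we have $S(V_A\oplus 0)\subseteq V_A\oplus 0$. Because $S$ is invertible, this restriction is injective, and hence $S(V_A)=V_A$ by dimension count. Because $S$ preserves the form, it maps $V_A^\perp$ onto $S(V_A)^\perp=V_A^\perp$. In the orthogonal direct sum $V=V_A\oplus V_B$, the complement of $V_A$ is exactly $V_B$: it contains $V_B$, and by Lemma~\ref{lem:orthogonal-complement} it has dimension $\dim V-\dim V_A=\dim V_B$. Therefore $S(V_B)=V_B$, which gives $S^{AB}=0$. It follows that $S^{AA}\in\Sp(V_A)$ and $S^{BB}\in\Sp(V_B)$, since the form on $V$ restricts to the forms on $V_A$ and $V_B$.

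\emph{Step 2: realize the blocks locally.} I will use the standard fact that every symplectic transformation of $\F_2^{2n}$ is the symplectic image of some $n$-qubit Clifford unitary; this is the surjectivity of the Clifford group onto $\Sp(2n,\F_2)$, obtained for instance by decomposing into $H$, phase, and CNOT generators. This gives Clifford unitaries $U_A'$ and $U_B'$ with symplectic transformations $S^{AA}$ and $S^{BB}$. Set $W:=(U_A'\otimes U_B')^\dagger U_c$. Then $W$ is Clifford and its symplectic transformation is the identity, so $WP(p)W^\dagger=\pm P(p)$ for every $p\in V$.

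\emph{Step 3: $W$ is a Pauli operator up to phase.} I expect this to be the only delicate point. The sign $\epsilon(p)\in\F_2$ defined by $WP(p)W^\dagger=(-1)^{\epsilon(p)}P(p)$ should be linear in $p$. To check this, conjugate a product $P(p)P(q)\propto P(p+q)$; the proportionality phase cancels on both sides, so $\epsilon(p+q)=\epsilon(p)+\epsilon(q)$.

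By nondegeneracy of $V$, specifically the isomorphism $J$ in the proof of Lemma~\ref{lem:orthogonal-complement}, there is a $q\in V$ with $\epsilon(p)=[p,q]$ for all $p$. Hence $P(q)$ produces the same signs under conjugation, and $P(q)^\dagger W$ commutes with every Pauli operator. Since the Pauli operators span all operators on $\cH_A\otimes\cH_B$, $P(q)^\dagger W$ is a scalar, so $W=e^{i\theta}P(q)$.

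\emph{Step 4: conclude.} Write $q=(q_A,q_B)$, so that $P(q)\propto P(q_A)\otimes P(q_B)$. Then
\begin{equation}
U_c=e^{i\theta'}\,U_A'P(q_A)\otimes U_B'P(q_B).
\end{equation}
Absorbing the phase $e^{i\theta'}$ into the first factor, set $U_A:=e^{i\theta'}U_A'P(q_A)$ and $U_B:=U_B'P(q_B)$. Both are Clifford unitaries, which proves the lemma.
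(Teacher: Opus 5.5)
Your proposal is correct and follows essentially the same route as the paper: you show $S(V_A)=V_A$, use $S(V_A^\perp)=S(V_A)^\perp$ together with $V_A^\perp=V_B$ to get block-diagonality, realize $S^{AA}$ and $S^{BB}$ by local Cliffords, and absorb the Pauli-times-phase discrepancy into the local factors. The one difference is that you prove the fact that a Clifford with trivial symplectic action is a Pauli up to phase, via the linear sign functional and the isomorphism $J$, whereas the paper invokes this standard fact without proof.
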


\paragraph*{Proof.}
We first show that $S$ is block diagonal.
The condition $S^{BA}=0$ gives $S(V_A)\subseteq V_A$.
Since $S$ is invertible, $\dim S(V_A)=\dim V_A$, so $S(V_A)=V_A$.

To determine $S(V_B)$, note that a symplectic transformation satisfies $S(W^\perp)=S(W)^\perp$ for every subspace $W\subseteq V_A\oplus V_B$.
Indeed, form preservation gives $[Sx,Sw]=[x,w]$ for all $x\in V_A\oplus V_B$ and $w\in W$.
Thus $x\in W^\perp$ if and only if $Sx\in S(W)^\perp$, and since $S$ is bijective, $S(W^\perp)=S(W)^\perp$.
The spaces $V_A$ and $V_B$ are nondegenerate and orthogonal under the symplectic form on $V_A\oplus V_B$, so $V_B=V_A^\perp$.
Applying $S(W^\perp)=S(W)^\perp$ with $W=V_A$ and using $S(V_A)=V_A$, we obtain
\begin{equation}
  S(V_B)=S(V_A^\perp)=S(V_A)^\perp=V_A^\perp=V_B,
\end{equation}
so $S^{AB}=0$ and $S$ is block diagonal.

The blocks $S^{AA}$ and $S^{BB}$ are the restrictions of $S$ to $V_A$ and $V_B$, respectively.
Since $S$ is invertible and satisfies $S(V_A)=V_A$ and $S(V_B)=V_B$, $S^{AA}:V_A\to V_A$ and $S^{BB}:V_B\to V_B$ are invertible.
For $x,y\in V_A$, preservation of the symplectic form by $S$ gives
\begin{equation}
  [S^{AA}x,S^{AA}y]=[Sx,Sy]=[x,y].
\end{equation}
Hence $S^{AA}$ preserves the form on $V_A$, and the same calculation for $x,y\in V_B$ shows that $S^{BB}$ preserves the form on $V_B$.
Therefore $S^{AA}\in\Sp(V_A)$ and $S^{BB}\in\Sp(V_B)$.

Choose Clifford unitaries $\widetilde U_A\in\cU(\cH_A)$ and $\widetilde U_B\in\cU(\cH_B)$ representing $S^{AA}$ and $S^{BB}$, respectively.
The tensor product $\widetilde U_A\otimes\widetilde U_B$ has the same symplectic transformation $S$ as $U_c$.
Therefore $U_c$ differs from $\widetilde U_A\otimes\widetilde U_B$ by a Pauli operator and a global phase.
Every bipartite Pauli is a tensor product of Pauli operators on $\cH_A$ and $\cH_B$, so
\begin{equation}
  U_c=(e^{i\theta}P_A\widetilde U_A)\otimes(P_B\widetilde U_B)
\end{equation}
for Pauli operators $P_A\in\cU(\cH_A)$ and $P_B\in\cU(\cH_B)$ and some $\theta\in\mathbb{R}$.
Thus $U_A:=e^{i\theta}P_A\widetilde U_A$ and $U_B:=P_B\widetilde U_B$ are Clifford unitaries satisfying $U_c=U_A\otimes U_B$.  \hfill$\square$

The following lemma expresses the operator Schmidt rank in terms of the symplectic representation.

\begin{lemma}
\label{lem:opsch-rank}
For a bipartite Clifford unitary $U_c\in\cU(\cH_A\otimes\cH_B)$ with symplectic transformation $S\in\Sp(V_A\oplus V_B)$,
\begin{equation}
  \log_2\OpSch(U_c)=\rank S^{BA}.
  \label{eq:opsch-cross-rank}
\end{equation}
\end{lemma}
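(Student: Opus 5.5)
We compute $\OpSch(U_c)$ by expanding $U_c$ in the Pauli basis and then identifying the relevant rank with $\rank S^{BA}$. Any operator on $\cH_A\otimes\cH_B$ can be written as
\begin{equation}
  U_c=\sum_{a\in V_A,\,b\in V_B}c_{a,b}\,P(a)\otimes P(b).
\end{equation}
Since the Pauli operators on each side form bases of $\mathcal{L}(\cH_A)$ and $\mathcal{L}(\cH_B)$, $\OpSch(U_c)$ equals the matrix rank of the coefficient matrix $(c_{a,b})$. The task is therefore to determine this matrix for a Clifford unitary.

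The plan is to write $U_c$ in a normal form in which the coefficient matrix becomes transparent.

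First, apply Lemma~\ref{lem:block-diagonal-symplectic} in spirit. Local Clifford unitaries $L_A\otimes L_B$ and $M_A\otimes M_B$ change neither $\OpSch$ (they act invertibly on each tensor factor) nor $\rank S^{BA}$. The latter holds because $S\mapsto (L_A\oplus L_B)S(M_A\oplus M_B)$ multiplies $S^{BA}$ by invertible maps on both sides.

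Second, show that by choosing suitable local symplectic maps, $S$ can be brought to a canonical form consisting of $r_1$ CZ-type pairs and $r_2$ SWAP-type pairs acting on distinct qubits, with $r_1+2r_2=\rank S^{BA}$. The tool is Lemma~\ref{lem:symplectic-basis-extension}, applied to the subspaces $\ker S^{BA}$ and $\operatorname{im} S^{BA}$, where the radical of the relevant subspace controls the number of CZ versus SWAP pairs. The canonical form is then identified with a product of local Cliffords and $\CZ^{\otimes r_1}\otimes\SWAP^{\otimes r_2}$ up to a Pauli and a phase, using the fact that Cliffords with equal symplectic maps differ by a Pauli.

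Third, compute directly. $\OpSch(\CZ)=2$ because $\CZ=|0\rangle\langle0|\otimes \mathbb{I}+|1\rangle\langle1|\otimes Z$. $\OpSch(\SWAP)=4$ because $\SWAP=\tfrac12\sum_{P}P\otimes P$ over the four single-qubit Paulis. The operator Schmidt rank is multiplicative under tensor products on disjoint qubit pairs, giving $\log_2\OpSch=r_1+2r_2=\rank S^{BA}$.

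The main obstacle is the normal form in the second step. It amounts to essentially re-deriving the Clifford bipartite normal form of Fattal et al.\ and is of comparable difficulty to the decomposition Lemma~\ref{lem:unitary-decomposition} itself. To avoid that, I would instead try a direct route that bypasses the normal form.

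For the direct route, consider the Choi state $|U_c\rangle$. It is a stabilizer state on $(A A')(B B')$, and its Schmidt rank across the $AA'|BB'$ cut equals $\OpSch(U_c)$. For a stabilizer state with stabilizer group $\mathcal{S}$, the entanglement entropy is
\begin{equation}
  \log_2(\text{Schmidt rank})=|A\text{-side}|-\dim \mathcal{S}_{AA'},
\end{equation}
where $\mathcal{S}_{AA'}$ is the subgroup supported only on $AA'$. The stabilizers of the Choi state are $\{P(Sp)\otimes \overline{P(p)}\}$ for $p\in V$. Such an element is supported on $AA'$ exactly when $p\in V_A$ and $Sp\in V_A$, that is, when $p\in\ker S^{BA}$. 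Hence
\begin{equation}
  \dim\mathcal{S}_{AA'}=\dim\ker S^{BA}=2n_A-\rank S^{BA},
\end{equation}
and the entropy is $2n_A-(2n_A-\rank S^{BA})=\rank S^{BA}$.

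Because stabilizer states have flat Schmidt spectra, the Schmidt rank is $2^{\rank S^{BA}}$. I would adopt this shorter argument as the main proof. The step needing the most care is justifying the stabilizer entanglement formula, which is the standard result of Fattal et al.
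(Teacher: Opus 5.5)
Your adopted argument is correct and is essentially the paper's own proof. You identify the Choi-state stabilizer elements supported only on $AA'$ with $\ker S^{BA}$, apply the stabilizer entanglement formula of Fattal et al.\ to get entropy $2n_A-\dim\ker S^{BA}=\rank S^{BA}$, and use flatness of the Schmidt spectrum to convert this entropy into $\log_2\OpSch(U_c)$.
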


\paragraph*{Proof.}
To relate the operator Schmidt rank of $U_c$ to $\rank S^{BA}$, we use its Choi state.
Introduce reference registers $A'$ and $B'$ with Hilbert spaces $\cH_{A'}\cong\cH_A$ and $\cH_{B'}\cong\cH_B$, containing $n_A$ and $n_B$ qubits, respectively.
The operator Schmidt rank of $U_c$ equals the Schmidt rank, across $AA'|BB'$, of its normalized Choi state
\begin{equation}
  \lvert U_c\rangle
  =(U_c\otimes\mathbb{I}_{A'B'})
  \lvert\Phi\rangle_{AA'}^{\otimes n_A}
  \lvert\Phi\rangle_{BB'}^{\otimes n_B}.
  \label{eq:choi-vector}
\end{equation}
This is a stabilizer state, so its entanglement can be found by counting the stabilizer elements supported only on $AA'$.
To identify these elements, we may omit overall phases and write the stabilizer elements as
\begin{equation}
  P\bigl(S(v,w)\bigr)_{AB}
  \otimes P(v)_{A'}^{\mathsf T}
  \otimes P(w)_{B'}^{\mathsf T},
  \label{eq:choi-stabilizers}
\end{equation}
where the elements are parametrized by $(v,w)\in V_A\oplus V_B$.
The superscript $\mathsf T$ denotes transpose in the computational basis.
Such an element is supported only on $AA'$ if and only if it acts as the identity on both $B$ and $B'$.
The identity action on $B'$ forces $w=0$, and the identity action on $B$ then requires $S^{BA}v=0$.
The binary labels of the stabilizer elements supported only on $AA'$ therefore form a vector space of dimension
\begin{equation}
  \dim\ker S^{BA}=2n_A-\rank S^{BA},
\end{equation}
where the equality follows from the rank--nullity theorem applied to $S^{BA}:V_A\to V_B$, using $\dim V_A=2n_A$.

For a bipartite pure stabilizer state on $N_A$ qubits at Alice, the entanglement entropy in bits is $N_A$ minus the number of independent stabilizer generators supported only on Alice's subsystem~\cite{fattal2004}.
Since $AA'$ consists of $2n_A$ qubits, the entanglement entropy of $\lvert U_c\rangle$ across $AA'|BB'$ is
\begin{equation}
  2n_A-\dim\ker S^{BA}
  =\rank S^{BA}.
\end{equation}
The nonzero Schmidt coefficients of $\lvert U_c\rangle$ are all equal, so this entropy also equals the base-two logarithm of its Schmidt rank, namely $\log_2\OpSch(U_c)$.
This proves the lemma.  \hfill$\square$

\section{Proof of Lemma~\ref{lem:unitary-decomposition}}
\label{app:decomposition}

We first describe the symplectic transformations associated with the CZ-type and SWAP-type elementary blocks.
Recall from Appendix~\ref{app:pauli-clifford} that $V_A=\F_2^{2n_A}$ and $V_B=\F_2^{2n_B}$ are the symplectic spaces used to represent, up to phase, Pauli operators on Alice's and Bob's systems, respectively.

\begin{lemma}[Symplectic transformation of a CZ-type elementary block]
\label{lem:cz-symplectic}
Let $q\in V_A\setminus\{0\}$ and $b\in V_B\setminus\{0\}$.
Define a linear map $R:V_A\oplus V_B\to V_A\oplus V_B$ by
\begin{equation}
  \begin{aligned}
    R(x)&=x+[x,q]b &&(x\in V_A),\\
    R(y)&=y+[y,b]q &&(y\in V_B).
  \end{aligned}
  \label{eq:cz-linear-action}
\end{equation}
Then $R$ is symplectic and represents a CZ-type elementary block, with $R^{-1}=R$ and $\rank R^{BA}=1$.
Moreover, $R$ fixes every $x\in V_A$ with $[x,q]=0$ and every $y\in V_B$ with $[y,b]=0$.
\end{lemma}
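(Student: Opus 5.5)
We verify each of the claimed properties of $R$ directly from Eq.~\eqref{eq:cz-linear-action}, and then exhibit a local Clifford conjugate of $\CZ_{AB}$ whose symplectic transformation equals $R$.

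\emph{Linearity, involution and rank.} Linearity holds because $[\cdot,q]$ and $[\cdot,b]$ are linear functionals. The cross block is $R^{BA}x=[x,q]b$. Since $V_A$ is nondegenerate and $q\neq0$, this functional is nonzero, and $b\neq0$, so $\rank R^{BA}=1$.

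For the involution, take $x\in V_A$. Then
\[
R(Rx)=Rx+R([x,q]b)=x+[x,q]b+[x,q]\bigl(b+[b,b]q\bigr).
\]
Alternation gives $[b,b]=0$, so $R^2x=x$. The same argument applies on $V_B$, so $R^{-1}=R$. The fixed-point statement is immediate from the formulas: if $[x,q]=0$ then $Rx=x$, and if $[y,b]=0$ then $Ry=y$.

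\emph{Symplecticity.} By bilinearity it suffices to check $[Rx,Rx']=[x,x']$, $[Ry,Ry']=[y,y']$ and $[Rx,Ry]=[x,y]=0$ for $x,x'\in V_A$ and $y,y'\in V_B$, using that $V_A\perp V_B$.

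For two vectors in $V_A$,
\[
[x+[x,q]b,\;x'+[x',q]b]=[x,x']+[x,q][x',q][b,b]=[x,x'],
\]
because $[x,b]=[x',b]=0$ and $[b,b]=0$.

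For mixed arguments,
\[
[x+[x,q]b,\;y+[y,b]q]=[y,b][x,q]+[x,q][b,y]=0
\]
over $\F_2$, using symmetry of the form. The $V_B$ case is identical.

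\emph{Realization as a CZ-type block.} Take the Clifford $\CZ_{a_1b_1}$ on the first qubits of each party. In symplectic coordinates its action is exactly Eq.~\eqref{eq:cz-linear-action} with $q_0=(e_1,0)$ and $b_0=(e_1,0)$, the $X$-labels of the first qubits. To see this, note that $X_{a}$ maps to $X_aZ_b$, and $[x,q_0]$ reads off the $z$-component of $x$ on $a_1$, which records whether $Z_{b_1}$ is added. One has to be careful here with the convention $[p,q]=x\cdot z'+z\cdot x'$, and adjust which vectors $q_0,b_0$ are if needed. This coordinate check is where I expect the only real care to be needed.

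Next, use the fact that $\Sp(V_A)$ acts transitively on nonzero vectors. This follows from Lemma~\ref{lem:symplectic-basis-extension}: nondegeneracy gives a partner $p$ with $[q,p]=1$, the pair $(q,p)$ extends to a symplectic basis, and mapping one symplectic basis to another is symplectic. Hence there are $L_A\in\Sp(V_A)$ with $L_Aq_0=q$, and similarly $L_B\in\Sp(V_B)$ with $L_Bb_0=b$.

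Set $L=L_A\oplus L_B$ and $R_0$ for the transformation of $\CZ_{a_1b_1}$. Because $L$ preserves the form,
\[
LR_0L^{-1}x=x+[L^{-1}x,q_0]\,Lb_0=x+[x,q]b,
\]
and similarly on $V_B$, so $LR_0L^{-1}=R$. Every symplectic map is realized by some Clifford, so choosing local Cliffords $U_A,U_B$ representing $L_A,L_B$, the unitary $(U_A\otimes U_B)\CZ_{a_1b_1}(U_A\otimes U_B)^\dagger$ is a CZ-type elementary block whose symplectic transformation is $R$.
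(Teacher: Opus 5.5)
Your argument is correct once one coordinate slip is fixed, and it is essentially the paper's route. Both proofs realize $R$ as $(L_A\oplus L_B)R_0(L_A\oplus L_B)^{-1}$, where $R_0$ is the symplectic matrix of a standard $\CZ$ and $L_A,L_B$ are local symplectic maps built from Lemma~\ref{lem:symplectic-basis-extension}.

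\textbf{The slip.} It is in the identification you yourself flagged. With the convention $[p,q]=x\cdot z'+z\cdot x'$, the correct choice is $q_0=b_0=(0,e_1)$, the $Z$-labels of the first qubits, not the $X$-labels.
\begin{itemize}
\item $\CZ$ appends $Z_{b_1}$ exactly when the $X$-component on $a_1$ equals $1$.
\item For $v=(v_x,v_z)\in V_A$, the pairing $[v,(0,e_1)]$ is the first entry of $v_x$, which is exactly that $X$-component.
\item The appended vector is the $Z$-label of $b_1$.
\end{itemize}
With your stated $X$-labels, the formula would send the $Z$-label of $a_1$ to itself plus the $X$-label of $b_1$, which is not $\CZ$. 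The corrected choice matches the paper, which sends the selected qubits' $Z,X$ labels to $(q,a)$ and $(b,s)$.

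\textbf{Remaining differences.} These are organizational.
\begin{itemize}
\item The paper checks $R$ on a symplectic basis extending $(q,a)$ and $(b,s)$. It then inherits symplecticity and $R^{-1}=R$ from $C_{\CZ}$ by conjugation, using $C_{\CZ}^2=\mathbb{I}$. Your separate bilinear-form checks of symplecticity and of $R^2=\mathbb{I}$ are correct but redundant once conjugacy is established.
\item Your coordinate-free identity $LR_0L^{-1}x=x+[L^{-1}x,q_0]\,Lb_0=x+[x,q]b$, which uses only form preservation by $L$, is a slightly cleaner way to establish the conjugacy than the paper's basis-by-basis computation.
\end{itemize}
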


\begin{proof}
Choose $a\in V_A$ and $s\in V_B$ with $[q,a]=[b,s]=1$, which exist by nondegeneracy of $V_A,V_B$ and the assumptions $q,b\neq0$.
By Lemma~\ref{lem:symplectic-basis-extension}, extend the symplectic pairs $(q,a)$ and $(b,s)$ to local symplectic bases.
Equation~\eqref{eq:cz-linear-action} gives
\begin{equation}
  q\mapsto q,\quad a\mapsto a+b,\quad
  b\mapsto b,\quad s\mapsto s+q.
  \label{eq:cz-basis-action}
\end{equation}
Every other vector in Alice's basis is orthogonal to $q$, and every other vector in Bob's basis is orthogonal to $b$, so $R$ fixes these vectors by Eq.~\eqref{eq:cz-linear-action}.
Thus, for suitable local symplectic transformations $D_A\in\Sp(V_A)$ and $D_B\in\Sp(V_B)$,
\begin{equation}
  R=(D_A\oplus D_B)C_{\CZ}(D_A\oplus D_B)^{-1}.
\end{equation}
Here, $D_A,D_B$ map the standard local symplectic bases to the chosen bases, sending the selected qubits' $Z,X$ labels to $(q,a)$ and $(b,s)$, respectively.
The map $C_{\CZ}$ is the symplectic matrix of CZ in the standard bases.
Since $D_A,D_B$ have local Clifford representatives, $R$ is symplectic and represents a CZ-type elementary block.
The identity $C_{\CZ}^2=\mathbb{I}$ also gives $R^{-1}=R$.

To compute the rank, we use Eq.~\eqref{eq:cz-linear-action}, which gives $R^{BA}x=[x,q]b$ for $x\in V_A$.
Since $R^{BA}a=b\neq0$, we have $\operatorname{im}R^{BA}=\operatorname{span}\{b\}$ and $\rank R^{BA}=1$.
The assertions about fixed vectors follow directly from Eq.~\eqref{eq:cz-linear-action}.
\end{proof}

\begin{lemma}[Symplectic transformation of a SWAP-type elementary block]
\label{lem:swap-symplectic}
Let $(p_1,p_2)$ and $(h_1,h_2)$ be symplectic pairs in $V_A$ and $V_B$, respectively, and set $V_A'=\operatorname{span}\{p_1,p_2\}$ and $V_B'=\operatorname{span}\{h_1,h_2\}$.
Define a linear map $R:V_A\oplus V_B\to V_A\oplus V_B$ by
\begin{equation}
  \begin{aligned}
    R(p_j)&=h_j,\quad R(h_j)=p_j\quad(j=1,2),\\
    R(z)&=z\quad\bigl(z\in (V_A')^{\perp_A}\oplus (V_B')^{\perp_B}\bigr).
  \end{aligned}
  \label{eq:swap-pair-action}
\end{equation}
Then $R$ is symplectic and represents a SWAP-type elementary block, with $R^{-1}=R$ and $\rank R^{BA}=2$.
\end{lemma}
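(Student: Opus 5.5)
We follow the template of the proof of Lemma~\ref{lem:cz-symplectic}: extend the given pairs to local symplectic bases, show that $R$ is a local conjugate of the standard SWAP matrix, and then read off the involution property and the rank.

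\textbf{Well-definedness.} Since $(p_1,p_2)$ is a symplectic pair, the argument in the proof of Lemma~\ref{lem:symplectic-basis-extension} shows that $V_A'$ is nondegenerate. Lemma~\ref{lem:orthogonal-complement} then gives $V_A=V_A'\oplus(V_A')^{\perp_A}$, and similarly $V_B=V_B'\oplus(V_B')^{\perp_B}$. Hence the prescriptions in Eq.~\eqref{eq:swap-pair-action}, which fix $R$ on $p_1,p_2,h_1,h_2$ and on the two complements, determine $R$ as a well-defined linear map on all of $V_A\oplus V_B$.

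\textbf{Local conjugate of SWAP.} By Lemma~\ref{lem:symplectic-basis-extension}, extend $(p_1,p_2)$ to a symplectic basis of $V_A$ and $(h_1,h_2)$ to a symplectic basis of $V_B$. The remaining basis vectors of $V_A$ lie in $(V_A')^{\perp_A}$, and those of $V_B$ lie in $(V_B')^{\perp_B}$. Choose $D_A\in\Sp(V_A)$ sending the standard labels $(Z_1,X_1,\dots)$ of a selected qubit to $(p_1,p_2,\dots)$, and $D_B\in\Sp(V_B)$ sending the corresponding labels to $(h_1,h_2,\dots)$. Let $C_{\SWAP}$ be the symplectic matrix of $\SWAP$ on the two selected qubits. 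It exchanges $Z_A\leftrightarrow Z_B$ and $X_A\leftrightarrow X_B$ and fixes all other standard basis vectors. Comparing actions on the two bases then gives
\begin{equation}
  R=(D_A\oplus D_B)\,C_{\SWAP}\,(D_A\oplus D_B)^{-1}.
\end{equation}
Since $D_A$ and $D_B$ have local Clifford representatives, $R$ is symplectic and represents a SWAP-type elementary block. Moreover $C_{\SWAP}^2=\mathbb{I}$, so $R^{-1}=R$. Alternatively, symplecticity can be checked directly: $R$ permutes the combined basis, and the Gram matrix is preserved because $[h_1,h_2]=[p_1,p_2]=1$, $V_A'\perp V_B'$, and the complements are mutually orthogonal.

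\textbf{Rank.} For $x\in V_A$, write $x=\alpha p_1+\beta p_2+z$ with $z\in(V_A')^{\perp_A}$. Then $R(x)=\alpha h_1+\beta h_2+z$, so $R^{BA}x=\alpha h_1+\beta h_2$. The image of $R^{BA}$ is therefore $V_B'$, which has dimension $2$ because $h_1,h_2$ are linearly independent (they form a symplectic pair). Hence $\rank R^{BA}=2$.

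The only point needing care is the ordering convention in $D_A$ and $D_B$. The map $C_{\SWAP}$ must send the label corresponding to $p_j$ to the label corresponding to $h_j$ for each $j$. This holds automatically when both local bases are ordered as (selected qubit's $Z$, selected qubit's $X$, rest), so we fix that ordering in the construction.
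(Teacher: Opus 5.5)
Your proposal is correct and follows essentially the same route as the paper. It uses the nondegeneracy of $V_A'$, $V_B'$ together with Lemma~\ref{lem:orthogonal-complement} for well-definedness, and extends to local symplectic bases to write $R=(D_A\oplus D_B)C_{\SWAP}(D_A\oplus D_B)^{-1}$, which gives symplecticity, the SWAP-type representation, and $R^{-1}=R$. It then reads off the rank from $\operatorname{im}R^{BA}=V_B'$; your optional direct Gram-matrix check of symplecticity is a harmless extra not in the paper.
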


\begin{proof}
The spaces $V_A'$ and $V_B'$ are nondegenerate, as can be checked directly from the definition in Eq.~\eqref{eq:nondegenerate-form} using the symplectic-pair relations $[p_1,p_2]=[h_1,h_2]=1$.
Lemma~\ref{lem:orthogonal-complement} gives $V_A=V_A'\oplus (V_A')^{\perp_A}$ and $V_B=V_B'\oplus (V_B')^{\perp_B}$, so Eq.~\eqref{eq:swap-pair-action} defines a linear map on all of $V_A\oplus V_B$.
Use Lemma~\ref{lem:symplectic-basis-extension} to extend the symplectic pairs $(p_1,p_2)$ and $(h_1,h_2)$ to local symplectic bases.
In these bases, $R$ exchanges the pairs $(p_1,p_2)$ and $(h_1,h_2)$ and fixes every other basis vector.
Thus, for suitable local symplectic transformations $D_A\in\Sp(V_A)$ and $D_B\in\Sp(V_B)$,
\begin{equation}
  R=(D_A\oplus D_B)C_{\SWAP}(D_A\oplus D_B)^{-1}.
\end{equation}
Here, $D_A,D_B$ map the standard local symplectic bases to the chosen bases, sending the selected qubits' $Z,X$ labels to $(p_1,p_2)$ and $(h_1,h_2)$, respectively.
The map $C_{\SWAP}$ is the symplectic matrix of SWAP in the standard bases.
Since $D_A,D_B$ have local Clifford representatives, $R$ is symplectic and represents a SWAP-type elementary block.
The identity $C_{\SWAP}^2=\mathbb{I}$ also gives $R^{-1}=R$.

To compute the rank, we use the definition of $R$, which gives $R^{BA}p_j=h_j$ for $j=1,2$ and $R^{BA}u=0$ for $u\in (V_A')^{\perp_A}$.
Using $V_A=V_A'\oplus (V_A')^{\perp_A}$, we obtain $\operatorname{im}R^{BA}=V_B'$ and $\rank R^{BA}=2$.
\end{proof}

To prove Lemma~\ref{lem:unitary-decomposition}, we first decompose the symplectic transformation of $U_c$ into a product of transformations representing elementary blocks, while satisfying a condition on the $BA$-block ranks.
We then use the corresponding Clifford unitaries to obtain a decomposition of $U_c$.

\begin{lemma}
\label{lem:symplectic-reduction}
For every $S\in\Sp(V_A\oplus V_B)$, there are symplectic transformations $R_1,\ldots,R_N\in\Sp(V_A\oplus V_B)$, each representing an elementary block as defined in Sec.~\ref{sec:decomposition}, such that
\begin{equation}
  S=R_1R_2\cdots R_N
  \label{eq:symplectic-factorization}
\end{equation}
and
\begin{equation}
  \sum_{j=1}^{N}\rank R_j^{BA}=\rank S^{BA}.
  \label{eq:rank-telescope}
\end{equation}
\end{lemma}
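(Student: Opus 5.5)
We argue by strong induction on $r:=\rank S^{BA}$. Once we have a factorization $S=R_1S'$ with $R_1$ of CZ or SWAP type and $\rank S'^{BA}=\rank S^{BA}-\rank R_1^{BA}$, we apply the induction hypothesis to $S'$, using $R_1^{-1}=R_1$ from Lemmas~\ref{lem:cz-symplectic} and~\ref{lem:swap-symplectic}.

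\emph{Base case.} If $r=0$, Lemma~\ref{lem:block-diagonal-symplectic} shows that $S$ is block diagonal. Hence $S$ represents a local Clifford, which is an identity-type elementary block, and we take $N=1$.

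\emph{Reformulation of the rank.} Since $\ker S^{BA}=\{x\in V_A: Sx\in V_A\}$, set $X:=S(V_A)$. This is a nondegenerate subspace of dimension $2n_A$, and
\begin{equation}
  \rank S^{BA}=2n_A-\dim(X\cap V_A)=\dim\pi_B(X).
\end{equation}
Here $\pi_A,\pi_B$ are the projections onto $V_A$ and $V_B$. For an involutive $R$ we have $(RS)(V_A)=R(X)$. So the task becomes: find $R$ with $\dim\pi_B(RX)=\dim\pi_B(X)-\rank R^{BA}$. Let $W:=\pi_B(X)\subseteq V_B$. The cases are decided by whether $W$ is isotropic.

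\emph{Case 1: $W$ is isotropic.} Pick $x_0\in X$ with $b:=\pi_B x_0\neq0$.

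First, $\pi_A x_0\notin X\cap V_A$. Otherwise $b\in X$. For every $x\in X$ we would then have $[x,b]=[\pi_B x,b]=0$ by isotropy, so $b\in\rad X=\{0\}$, a contradiction.

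By Lemma~\ref{lem:orthogonal-complement} applied inside $V_A$, the double-perp identity gives $q\in V_A$ with $[q,X\cap V_A]=0$ and $[q,\pi_A x_0]=1$. Let $R$ be the CZ-type map of Lemma~\ref{lem:cz-symplectic} for $(q,b)$. For $x\in X$, isotropy gives $[\pi_B x,b]=0$, so
\begin{equation}
  \pi_B(Rx)=\pi_B x+[\pi_A x,q]\,b .
\end{equation}
Thus $\pi_B(RX)\subseteq W$, and this map kills $X\cap V_A$ because $q\perp X\cap V_A$.

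Now take $x_1\in X$ with $\pi_B x_1=b$ and $[\pi_Ax_1,q]=0$, if such an $x_1$ exists. Then $x_0+x_1\in X\cap V_A$, so $[\pi_A x_0,q]=[\pi_A x_1,q]=0$, a contradiction. Hence $b\notin\pi_B(RX)$, and by counting dimensions $\dim\pi_B(RX)=r-1$.

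\emph{Case 2: $W$ is not isotropic.} Choose $h_1,h_2\in W$ with $[h_1,h_2]=1$, and lifts $x_1,x_2\in X$ with $\pi_B x_j=h_j$. Complete $(h_1,h_2)$ to a symplectic pair setup, and choose a symplectic pair $(p_1,p_2)$ in $V_A$ orthogonal to $X\cap V_A$ and adapted to $\pi_A x_1,\pi_A x_2$. The intended choice has $p_j$ related to $\pi_A x_j$ modulo $X\cap V_A$.

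Apply the SWAP-type map $R$ of Lemma~\ref{lem:swap-symplectic}. It moves $h_j$ into $V_A$, so $Rx_j$ acquires a smaller $B$-component. It also fixes everything in $(V_A')^{\perp_A}\oplus(V_B')^{\perp_B}$, so $X\cap V_A$ is preserved. We then aim to show that $\pi_B(RX)$ is contained in $W\cap(V_B')^{\perp_B}$ plus the span of images that lie in $X$'s kernel, giving $\dim\pi_B(RX)=r-2$.

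\emph{Main obstacle.} Case 2 is where I expect the difficulty: choosing $(p_1,p_2)$ so that no new $B$-components are created. The $A$-parts $\pi_A x_j$ need not form a symplectic pair, since $[\pi_Ax_1,\pi_Ax_2]=[x_1,x_2]+1$.

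My first attempt will be to exploit the freedom in the lifts $x_j$ (modulo $X\cap V_A$) and in the pair $(h_1,h_2)$, and to use nondegeneracy of $X$ to arrange $[x_1,x_2]=0$. That would make $(\pi_Ax_1,\pi_Ax_2)$ itself a symplectic pair, and I would take $p_j=\pi_Ax_j$.

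If that fails, I will fall back on two CZ-type steps. In that route, the first CZ step is chosen so that it renders the remaining image isotropic while reducing the rank by one, and Case 1 then applies.
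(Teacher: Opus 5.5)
Your base case, the inductive scaffolding, the reformulation $\rank S^{BA}=2n_A-\dim(X\cap V_A)$, and Case~1 are correct. The gap is Case~2, which you leave open, and neither route you sketch can close it.

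First, the intended choice $p_j=\pi_A x_j$ defeats the purpose. If $x_j=p_j+h_j$ and $R$ exchanges $p_j\leftrightarrow h_j$, then $Rx_j=h_j+p_j=x_j$, so nothing is pulled into $V_A$. For the SWAP to send $x_j$ into $V_A$, you would instead need $p_1,p_2$ orthogonal both to $X\cap V_A$ and to $\pi_A x_1,\pi_A x_2$, so that $Rx_j=\pi_A x_j+p_j$. Moreover, arranging $[x_1,x_2]=0$ is impossible already for a single $\SWAP_{AB}$ with $n_A=n_B=1$. There $X=V_B$, the lifts are forced to be $x_j=h_j$, and $[x_1,x_2]=[h_1,h_2]=1$.

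Second, the CZ fallback fails whenever $X=(X\cap V_A)\oplus(X\cap V_B)$. Every such $X$ with $\rank S^{BA}>0$ lands in your Case~2, since then $\pi_B(X)=X\cap V_B$ is nonzero and nondegenerate. Now take the CZ map with data $(q,b)$. A vector $k+y$ with $k\in X\cap V_A$ and $y\in X\cap V_B$ is mapped into $V_A$ only if $y=[k,q]b$, and its image is then $k$ (using $[b,b]=0$). Hence $R(X)\cap V_A\subseteq X\cap V_A$, and no CZ-type block lowers the rank; extra local factors change neither the rank nor the splitting. A SWAP step is therefore unavoidable there, and you have no correct construction of it.

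The missing idea is the right dichotomy: distinguish whether $X$ splits as $(X\cap V_A)\oplus(X\cap V_B)$, not whether $\pi_B(X)$ is isotropic.

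\emph{If $X$ does not split.} Pick $x_0\in X$ outside $(X\cap V_A)\oplus(X\cap V_B)$. Then $\pi_A x_0\notin X\cap V_A$ and $\pi_B x_0\neq0$, and your Case~1 argument goes through verbatim. Isotropy was used there only to obtain $\pi_A x_0\notin X\cap V_A$; the term $[\pi_B x,b]q$ lies in $V_A$ and never affects $\pi_B(Rx)$. More directly, $R(x_0)=\pi_A x_0$ and $R$ fixes $X\cap V_A$, so $R(X)\cap V_A\supseteq(X\cap V_A)\oplus\operatorname{span}\{\pi_A x_0\}$.

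\emph{If $X$ splits.} Both summands are nondegenerate, and $\dim(X\cap V_A)^{\perp_A}=\dim(X\cap V_B)=\rank S^{BA}>0$. Exchange a symplectic pair $(h_1,h_2)$ of \emph{pure-$B$ vectors of $X$} with a symplectic pair $(p_1,p_2)$ in $(X\cap V_A)^{\perp_A}$. Then $R$ fixes $X\cap V_A$ and sends $h_j\in X$ to $p_j$, giving $R(X)\cap V_A\supseteq(X\cap V_A)\oplus\operatorname{span}\{p_1,p_2\}$.

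This is exactly the paper's case split and construction.
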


\begin{proof}
We first show that the sum of the ranks of the factors' $BA$ blocks is bounded below by $\rank S^{BA}$, then construct a factorization that attains this bound.
For $Q_1,Q_2\in\Sp(V_A\oplus V_B)$, block multiplication gives
\begin{equation}
  (Q_1Q_2)^{BA}=Q_1^{BA}Q_2^{AA}+Q_1^{BB}Q_2^{BA},
\end{equation}
so the rank inequalities for sums and products imply
\begin{equation}
  \rank (Q_1Q_2)^{BA}\leq\rank Q_1^{BA}+\rank Q_2^{BA}.
  \label{eq:cross-rank-subadditivity}
\end{equation}
Applying this inequality repeatedly to any factorization $S=R_1\cdots R_N$ gives
\begin{equation}
  \rank S^{BA}=\rank(R_1\cdots R_N)^{BA}
  \leq\sum_{j=1}^{N}\rank R_j^{BA}.
  \label{eq:factorization-rank-lower-bound}
\end{equation}
To prove Eq.~\eqref{eq:rank-telescope}, it therefore suffices to construct a factorization $S=R_1\cdots R_N$ satisfying the reverse inequality,
\begin{equation}
  \sum_{j=1}^{N}\rank R_j^{BA}\leq\rank S^{BA}.
  \label{eq:factorization-rank-upper-bound}
\end{equation}
The two inequalities then force equality.
We construct the factors iteratively, starting from $S_0=S$.
At step $i\geq1$, we stop if $\rank S_{i-1}^{BA}=0$.
Otherwise, we construct a symplectic transformation $R_i\in\Sp(V_A\oplus V_B)$ representing an elementary block, with $\rank R_i^{BA}\in\{1,2\}$, and set
\begin{equation}
  S_i=R_i^{-1}S_{i-1}\in\Sp(V_A\oplus V_B)
  \label{eq:iterative-residual}
\end{equation}
so that
\begin{equation}
  \rank S_i^{BA}\leq\rank S_{i-1}^{BA}-\rank R_i^{BA}.
  \label{eq:rank-reduction-target}
\end{equation}
Each $S_i$ is symplectic because $R_i$ and $S_{i-1}$ are symplectic.
Equality holds in Eq.~\eqref{eq:rank-reduction-target}, since applying Eq.~\eqref{eq:cross-rank-subadditivity} to $S_{i-1}=R_iS_i$ gives the reverse inequality.
To perform such a step, set
\begin{equation}
  W=S_{i-1}(V_A),\qquad W_A=W\cap V_A,\qquad W_B=W\cap V_B.
\end{equation}
Thus $W_A$ and $W_B$ consist of the vectors in $W$ supported only on Alice and only on Bob, respectively.
Define the coordinate projection $\pi_B:V_A\oplus V_B\to V_B$ by
\begin{equation}
  \pi_B(a+b)=b,\qquad a\in V_A,\quad b\in V_B.
\end{equation}
Its restriction $\pi_B|_W:W\to V_B$ sends a vector to zero exactly when that vector lies in $V_A$.
Hence
\begin{equation}
  \ker(\pi_B|_W)=W\cap V_A=W_A.
\end{equation}
Moreover, every vector of $W$ is $S_{i-1}x$ for some $x\in V_A$, and $\pi_B(S_{i-1}x)=S_{i-1}^{BA}x$.
Therefore
\begin{equation}
  \operatorname{im}(\pi_B|_W)=\operatorname{im}S_{i-1}^{BA}.
\end{equation}
Applying the rank--nullity theorem to $\pi_B|_W:W\to V_B$, and using $\dim W=\dim V_A$ because $S_{i-1}$ is invertible, we obtain
\begin{align}
  \rank S_{i-1}^{BA}
  &=\rank(\pi_B|_W)\notag\\
  &=\dim W-\dim\ker(\pi_B|_W)\notag\\
  &=\dim V_A-\dim W_A.
  \label{eq:rank-from-W_A}
\end{align}

The update in Eq.~\eqref{eq:iterative-residual} gives $S_i(V_A)=R_i^{-1}(W)$.
As in Eq.~\eqref{eq:rank-from-W_A}, applying the rank--nullity theorem to the restriction $\pi_B|_{R_i^{-1}(W)}:R_i^{-1}(W)\to V_B$ gives
\begin{equation}
  \rank S_i^{BA}=\dim V_A-\dim(R_i^{-1}(W)\cap V_A).
\end{equation}
Together with Eq.~\eqref{eq:rank-from-W_A}, this shows that Eq.~\eqref{eq:rank-reduction-target} is equivalent to
\begin{equation}
  \dim(R_i^{-1}(W)\cap V_A)\geq\dim W_A+\rank R_i^{BA}.
  \label{eq:intersection-growth-target}
\end{equation}
Since $W_A,W_B\subseteq W$ and $V_A\cap V_B=\{0\}$, we have $W_A\oplus W_B\subseteq W$.
We construct $R_i$ satisfying Eq.~\eqref{eq:intersection-growth-target} by distinguishing whether this inclusion is strict.
\begin{enumerate}
\item Suppose $W\neq W_A\oplus W_B$.
In this case, we construct a symplectic transformation $R_i$ representing a CZ-type elementary block.

Choose $a+b\in W\setminus(W_A\oplus W_B)$ with $a\in V_A$ and $b\in V_B$.
If $a\in W_A$, then $b=(a+b)+a\in W\cap V_B=W_B$, contradicting this choice.
Similarly, $b\in W_B$ would imply $a\in W_A$.
Thus $a\notin W_A$ and $b\notin W_B$, so both are nonzero.
If $a$ were orthogonal to every vector of $W_A^{\perp_A}$, Lemma~\ref{lem:orthogonal-complement} would give $a\in(W_A^{\perp_A})^{\perp_A}=W_A$, again a contradiction.
We can therefore choose $q\in W_A^{\perp_A}$ such that
\begin{equation}
  [a,q]=1,\qquad [k,q]=0\quad(k\in W_A).
\end{equation}
Apply Lemma~\ref{lem:cz-symplectic} to $q$ and $b$ to obtain a CZ-type symplectic transformation $R_i$ with $\rank R_i^{BA}=1$.
Equation~\eqref{eq:cz-linear-action} gives
\begin{equation}
  R_i(k)=k\quad(k\in W_A),\qquad R_i(a)=a+b\in W.
\end{equation}

To prove Eq.~\eqref{eq:intersection-growth-target}, we show that $R_i^{-1}(W)\cap V_A$ contains $W_A$ together with the additional vector $a$.
Since $R_i(k)=k$ for $k\in W_A$ and $R_i(a)=a+b$, invertibility gives
\begin{equation}
  R_i^{-1}(k)=k\quad(k\in W_A),\qquad R_i^{-1}(a+b)=a.
\end{equation}
For every $k\in W_A$, the vectors $k$ and $a+b$ belong to $W$, so their inverse images $k$ and $a$ belong to $R_i^{-1}(W)$.
Both also lie in $V_A$.
The intersection $R_i^{-1}(W)\cap V_A$ is a linear subspace, so it contains all their linear combinations:
\begin{equation}
  W_A\oplus\operatorname{span}\{a\}\subseteq R_i^{-1}(W)\cap V_A.
\end{equation}
The sum is direct because $a\notin W_A$, so its dimension is $\dim W_A+1=\dim W_A+\rank R_i^{BA}$, proving Eq.~\eqref{eq:intersection-growth-target}.

\item Suppose $W=W_A\oplus W_B$.
In this case, we construct a symplectic transformation $R_i$ representing a SWAP-type elementary block.

We first establish the orthogonal decompositions
\begin{equation}
  V_A=W_A\oplus W_A^{\perp_A},\qquad W=W_A\oplus W_B,
\end{equation}
in which all summands are nondegenerate, together with the dimension relation
\begin{equation}
  \dim W_A^{\perp_A}=\dim W_B=\rank S_{i-1}^{BA}>0.
  \label{eq:dimension-relation}
\end{equation}

To verify the nondegeneracy assertions, first note that $W=S_{i-1}(V_A)$ is nondegenerate because $V_A$ is nondegenerate and $S_{i-1}$ preserves the symplectic form.
The assumed decomposition $W=W_A\oplus W_B$ is orthogonal because $V_A\perp V_B$.
If a vector in $W_A$ is orthogonal to all of $W_A$, it is also orthogonal to $W_B$, hence to all of $W$, and must therefore be zero.
The same argument applies to $W_B$, so both $W_A$ and $W_B$ are nondegenerate.
Lemma~\ref{lem:orthogonal-complement} now gives $V_A=W_A\oplus W_A^{\perp_A}$ and the nondegeneracy of $W_A^{\perp_A}$.

To establish the dimension relation in Eq.~\eqref{eq:dimension-relation}, note that invertibility of $S_{i-1}$ gives $\dim W=\dim V_A$.
Taking dimensions in the two decompositions therefore gives $\dim W_A^{\perp_A}=\dim W_B=\dim V_A-\dim W_A$.
By Eq.~\eqref{eq:rank-from-W_A}, this common dimension is $\rank S_{i-1}^{BA}$, which is positive by assumption.

Choose symplectic pairs $(p_1,p_2)$ in $W_A^{\perp_A}$ and $(h_1,h_2)$ in $W_B$, with $[p_1,p_2]=[h_1,h_2]=1$.
Such pairs exist because $W_A^{\perp_A}$ and $W_B$ are nonzero and nondegenerate.
Write $V_A'=\operatorname{span}\{p_1,p_2\}\subseteq V_A$ and $V_B'=\operatorname{span}\{h_1,h_2\}\subseteq V_B$.

Apply Lemma~\ref{lem:swap-symplectic} to the pairs $(p_1,p_2)$ and $(h_1,h_2)$ to obtain a SWAP-type symplectic transformation $R_i$ with $\rank R_i^{BA}=2$.

Finally, to prove Eq.~\eqref{eq:intersection-growth-target}, we show that $R_i^{-1}(W)\cap V_A$ contains the direct sum $W_A\oplus V_A'$.
First, the inclusion $V_A'\subseteq W_A^{\perp_A}$ gives $W_A\subseteq (V_A')^{\perp_A}$, so $R_i$ fixes every vector in $W_A$.
Since $W_A\subseteq W$, this implies $W_A\subseteq R_i^{-1}(W)\cap V_A$.
Next, $R_i(V_A')=V_B'\subseteq W_B\subseteq W$, so $V_A'\subseteq R_i^{-1}(W)\cap V_A$ as well.
These two subspaces intersect trivially: $W_A\cap V_A'\subseteq W_A\cap W_A^{\perp_A}=\{0\}$ because $W_A$ is nondegenerate.
Therefore
\begin{equation}
  W_A\oplus V_A'\subseteq R_i^{-1}(W)\cap V_A.
\end{equation}
The subspace on the left has dimension $\dim W_A+2=\dim W_A+\rank R_i^{BA}$, proving Eq.~\eqref{eq:intersection-growth-target}.
\end{enumerate}

Thus Eq.~\eqref{eq:rank-reduction-target} holds in both cases, with $\rank R_i^{BA}\in\{1,2\}$.
Each step therefore strictly decreases the nonnegative integer $\rank S_{i-1}^{BA}$, so the process stops after finitely many steps, say $M$, with $\rank S_M^{BA}=0$.
Rearranging Eq.~\eqref{eq:iterative-residual} as $S_{i-1}=R_iS_i$ and iterating gives
\begin{equation}
  S=R_1\cdots R_M S_M,
\end{equation}
with the product taken to be empty if $M=0$.
By Lemma~\ref{lem:block-diagonal-symplectic}, $S_M$ represents a local elementary block.
Set $R_{M+1}=S_M$ and $N=M+1$; then $S=R_1\cdots R_N$ and $\rank R_{M+1}^{BA}=0$.
If $\rank S^{BA}=0$ initially, we take $M=0$ and obtain $R_1=S$ directly.
Summing Eq.~\eqref{eq:rank-reduction-target} over all steps yields
\begin{align}
  \sum_{j=1}^{N}\rank R_j^{BA}
  &=\sum_{i=1}^{M}\rank R_i^{BA}\notag\\
  &\leq\sum_{i=1}^{M}\bigl(\rank S_{i-1}^{BA}-\rank S_i^{BA}\bigr)\notag\\
  &=\rank S_0^{BA}-\rank S_M^{BA}\notag\\
  &=\rank S^{BA}.
\end{align}
This proves Eq.~\eqref{eq:factorization-rank-upper-bound}; combining it with Eq.~\eqref{eq:factorization-rank-lower-bound} gives the required equality.
\end{proof}

\begin{proof}[Proof of Lemma~\ref{lem:unitary-decomposition}]
Apply Lemma~\ref{lem:symplectic-reduction} to the symplectic transformation $S\in\Sp(V_A\oplus V_B)$ of $U_c$, and choose a Clifford representative $G_j\in\cU(\cH_A\otimes\cH_B)$ of each $R_j$.
Any two such representatives differ by a Pauli up to phase, so each $G_j$ remains an elementary block.
The product $G=G_1\cdots G_N$ and $U_c$ have the same symplectic transformation.
Therefore
\begin{equation}
  G^\dagger U_c=e^{i\theta}(P_A\otimes P_B)
\end{equation}
for Pauli operators $P_A\in\cU(\cH_A)$ and $P_B\in\cU(\cH_B)$ and some $\theta\in\mathbb{R}$.
Append $P_A\otimes P_B$ on the right of $G_1\cdots G_N$ and relabel the elementary blocks.
\begin{samepage}
This additional local elementary block has operator Schmidt rank one and symplectic transformation $\mathbb{I}$, whose $BA$ block has rank zero.
Lemma~\ref{lem:opsch-rank} and Eq.~\eqref{eq:rank-telescope} then give
\begin{align}
  \sum_j\log_2\OpSch(G_j)
  &=\sum_j\rank R_j^{BA}\\
  &=\rank S^{BA}
  =\log_2\OpSch(U_c).\qedhere
\end{align}
\end{samepage}
\end{proof}

\section{Details of the example in Fig.~\ref{fig:hidden-swap-input}}
\label{app:hidden-swap-derivation}

We derive Eq.~\eqref{eq:hidden-swap-blocks} by following the procedure used to prove Lemma~\ref{lem:unitary-decomposition} in Appendix~\ref{app:decomposition}.

\subsection*{Symplectic matrix and operator Schmidt rank of the original unitary}

For each qubit $t\in\{a_1,a_2,b_1,b_2\}$, let $x_t,z_t\in V_A\oplus V_B$ denote the binary labels of $X_t,Z_t$, with identity operators on the other qubits.
We use the ordered basis
\begin{equation}
  (x_{a_1},x_{a_2},z_{a_1},z_{a_2},
    x_{b_1},x_{b_2},z_{b_1},z_{b_2}).
  \label{eq:example-symplectic-basis}
\end{equation}
The symplectic matrix of the unitary $U$ in Eq.~\eqref{eq:hidden-swap-blocks} is given by
\begin{equation}
  S_0=S=
  \partitionedmatrix{
    1\&1\&0\&0\&0\&1\&0\&0\\
    1\&1\&0\&0\&1\&0\&0\&0\\
    0\&0\&1\&0\&0\&0\&1\&0\\
    0\&0\&0\&0\&0\&0\&1\&0\\
    0\&1\&0\&0\&0\&0\&0\&0\\
    0\&1\&0\&0\&0\&1\&0\&0\\
    0\&0\&0\&1\&0\&0\&1\&1\\
    0\&0\&1\&0\&0\&0\&1\&1\\
  }.
  \label{eq:example-S0}
\end{equation}
The $BA$ block of $S_0$ has a zero first column and three linearly independent remaining columns, so
\begin{equation}
  \rank S_0^{BA}=3,\qquad \OpSch(U)=2^3=8,
\end{equation}
where the second equality follows from Lemma~\ref{lem:opsch-rank}.

For each residual matrix $S_i$, write
\begin{equation}
  \begin{aligned}
    W_i&=S_i(V_A),\\
    W_{i,A}&=W_i\cap V_A,\qquad W_{i,B}=W_i\cap V_B.
  \end{aligned}
\end{equation}

\subsection*{Extraction of a CZ-type elementary block}

From the first four columns of $S_0$, we obtain
\begin{align}
  W_0&=\operatorname{span}\{x_{a_1}+x_{a_2},\,x_{b_1}+x_{b_2},\,z_{a_1}+z_{b_2},\,z_{b_1}\},\notag\\
  W_{0,A}&=\operatorname{span}\{x_{a_1}+x_{a_2}\},\notag\\
  W_{0,B}&=\operatorname{span}\{x_{b_1}+x_{b_2},z_{b_1}\}.
\end{align}
Since $\dim W_0=4$ whereas $\dim W_{0,A}+\dim W_{0,B}=3$, we have $W_0\neq W_{0,A}\oplus W_{0,B}$.
This is Case 1 in the proof of Lemma~\ref{lem:symplectic-reduction}.
Choose
\begin{equation}
  a=z_{a_1},\qquad b=z_{b_1}+z_{b_2},\qquad q=x_{a_1}+x_{a_2}.
\end{equation}
Here, $a\in V_A$, $b\in V_B$, and $a+b=(z_{a_1}+z_{b_2})+z_{b_1}$ lies in $W_0\setminus(W_{0,A}\oplus W_{0,B})$.
Also, $q\in W_{0,A}^{\perp_A}$ and $[a,q]=1$.
Lemma~\ref{lem:cz-symplectic} therefore gives the CZ-type transformation
\begin{equation}
  \begin{aligned}
    R_1(v_A)&=v_A+[v_A,q]b &&(v_A\in V_A),\\
    R_1(v_B)&=v_B+[v_B,b]q &&(v_B\in V_B).
  \end{aligned}
  \label{eq:example-R1}
\end{equation}
The first residual is
\begin{equation}
  S_1=R_1^{-1}S_0=
  \partitionedmatrix{
    1\&1\&0\&0\&0\&0\&0\&0\\
    1\&1\&0\&0\&1\&1\&0\&0\\
    0\&0\&1\&0\&0\&0\&1\&0\\
    0\&0\&0\&0\&0\&0\&1\&0\\
    0\&1\&0\&0\&0\&0\&0\&0\\
    0\&1\&0\&0\&0\&1\&0\&0\\
    0\&0\&1\&1\&0\&0\&1\&1\\
    0\&0\&0\&0\&0\&0\&1\&1\\
  },
  \label{eq:example-S1}
\end{equation}
with $\rank S_1^{BA}=2$.

\subsection*{Extraction of a SWAP-type elementary block}

The first four columns of $S_1$ give
\begin{align}
  W_{1,A}&=\operatorname{span}\{x_{a_1}+x_{a_2},z_{a_1}\},\notag\\
  W_{1,B}&=\operatorname{span}\{x_{b_1}+x_{b_2},z_{b_1}\},\notag\\
  W_1&=W_{1,A}\oplus W_{1,B}.
\end{align}
This is Case 2 in the proof of Lemma~\ref{lem:symplectic-reduction}.
Here,
\begin{equation}
  W_{1,A}^{\perp_A}
  =\operatorname{span}\{z_{a_1}+z_{a_2},x_{a_2}\}.
\end{equation}
Choose the symplectic pairs
\begin{align}
  (p_1,p_2)&=(z_{a_1}+z_{a_2},x_{a_2})
    &&\text{in }W_{1,A}^{\perp_A},\notag\\
  (h_1,h_2)&=(z_{b_1},x_{b_1}+x_{b_2})
    &&\text{in }W_{1,B}.
\end{align}
They satisfy $[p_1,p_2]=[h_1,h_2]=1$.
The SWAP-type transformation $R_2$ from Lemma~\ref{lem:swap-symplectic} exchanges these pairs:
\begin{equation}
  R_2(p_j)=h_j,\qquad R_2(h_j)=p_j\quad(j=1,2).
  \label{eq:example-R2}
\end{equation}
It fixes $q,a,x_{b_2},b$, which span the symplectic orthogonal complement of $\operatorname{span}\{p_1,p_2,h_1,h_2\}$.
The second residual is
\begin{equation}
  S_2=R_2^{-1}S_1=
  \partitionedmatrix{
    1\&1\&0\&0\&0\&0\&0\&0\\
    1\&0\&0\&0\&0\&0\&0\&0\\
    0\&0\&0\&1\&0\&0\&0\&0\\
    0\&0\&1\&1\&0\&0\&0\&0\\
    0\&0\&0\&0\&1\&1\&0\&0\\
    0\&0\&0\&0\&1\&0\&0\&0\\
    0\&0\&0\&0\&0\&0\&0\&1\\
    0\&0\&0\&0\&0\&0\&1\&1\\
  }.
  \label{eq:example-S2}
\end{equation}
Now $\rank S_2^{BA}=0$, and
\begin{equation}
  W_2=W_{2,A}=V_A,\qquad W_{2,B}=\{0\}.
\end{equation}
The iteration stops, leaving the local symplectic transformation $S_2$ and the factorization $S_0=R_1R_2S_2$.
The two extracted blocks have $BA$ ranks one and two, whose sum equals $\rank S_0^{BA}=3$.

\subsection*{Recovery of the unitary decomposition}

To identify the Clifford gates corresponding to $R_1,R_2$, recall the local Clifford unitary $D$ from Eq.~\eqref{eq:hidden-swap-local-factors}:
\begin{equation}
  D=\bigl(\mathrm{CNOT}_{a_1a_2}H_{a_1}\bigr)
    \otimes\mathrm{CNOT}_{b_1b_2}.
\end{equation}
Conjugation by $D$ sends the Pauli labels $z_{a_1},z_{b_2}$ to $q,b$, respectively.
It also sends $(z_{a_2},x_{a_2})$ to $(p_1,p_2)$ and $(z_{b_1},x_{b_1})$ to $(h_1,h_2)$.
Therefore, Clifford representatives of $R_1$ and $R_2$ are
\begin{equation}
  G_1=D\CZ_{a_1b_2}D^\dagger,\qquad
  G_2=D\SWAP_{a_2b_1}D^\dagger,
\end{equation}
respectively.
The block-diagonal matrix $S_2$ is represented by the local Clifford unitary $G_3=F$, where
\begin{equation}
  F=\bigl(\mathrm{CNOT}_{a_1a_2}\mathrm{CNOT}_{a_2a_1}\bigr)
    \otimes\bigl(\mathrm{CNOT}_{b_1b_2}\mathrm{CNOT}_{b_2b_1}\bigr).
\end{equation}
Thus $G_1G_2G_3$ and $U$ have the same symplectic matrix.

Direct calculation shows that $U$ and $G_1G_2G_3$ act identically, including the phase, on $\lvert\alpha_1\alpha_2\rangle_A\otimes\lvert\beta_1\beta_2\rangle_B$ for all $\alpha_i,\beta_i\in\F_2$ ($i=1,2$).
Since these states form a basis, we obtain
\begin{equation}
  U=G_1G_2G_3.
  \label{eq:example-unitary-recovery}
\end{equation}

\subsection*{Entanglement costs of the previous methods}

We calculate the Bell-pair consumption for the two previous methods listed in Table~\ref{tab:hidden-swap-comparison}.
We apply the gate-grouping procedures of Refs.~\cite{andresmartinez2019,wu2023} to the circuit in Fig.~\ref{fig:hidden-swap-input}, with the qubit allocation fixed throughout.
After converting each CNOT gate using $\mathrm{CNOT}_{uv}=H_v\CZ_{uv}H_v$, label the five CZ gates $Q_1,\ldots,Q_5\in\cU(\cH_A\otimes\cH_B)$ in execution order from left to right:
\begin{equation}
  \begin{aligned}
    Q_1&=\CZ_{a_1b_1},\qquad Q_2=\CZ_{a_2b_2},\\
    Q_3&=\CZ_{b_1a_2},\qquad Q_4=\CZ_{a_2b_1},\\
    Q_5&=\CZ_{b_2a_1}.
  \end{aligned}
\end{equation}
The CZ gates encountered along each wire are
\begin{equation}
  \begin{aligned}
    a_1 &: (Q_1,Q_5), & a_2 &: (Q_2,Q_3,Q_4),\\
    b_1 &: (Q_1,Q_3,Q_4), & b_2 &: (Q_2,Q_5).
  \end{aligned}
\end{equation}
Every consecutive pair in these lists is separated by a Hadamard gate on that wire.
Thus, in the hypergraph construction in Sec.~III~A of Ref.~\cite{andresmartinez2019}, no two CZ gates belong to the same group.
Each of the five CZ gates crosses the Alice--Bob partition and contributes one hyperedge cut, giving a cost of five Bell pairs.

For the method of Ref.~\cite{wu2023}, the block between $Q_2$ and $Q_4$ on $a_2$ is $H_{a_2}Q_3H_{a_2}$, which satisfies the global H-type embedding rule in Corollary~30 of that reference.
Consequently, $Q_2$ and $Q_4$ form a single packet, meaning that their distributed implementation shares one Bell pair.
The embedded gate $Q_3$ still consumes a separate Bell pair, while the embedding requires only additional local corrections.
Together with the single-gate packets for $Q_1$ and $Q_5$, this gives the four packets $\{Q_2,Q_4\}$, $\{Q_1\}$, $\{Q_3\}$, and $\{Q_5\}$, each consuming one Bell pair.

\section{Proof of Theorem~\ref{thm:clifford-t-constant-space}}
\label{app:clifford-t}

\begin{proof}
For a signed Pauli operator $P\in\cU(\cH_A\otimes\cH_B)$, meaning a tensor product of $\mathbb{I},X,Y,Z$ multiplied by $+1$ or $-1$, define
\begin{equation}
  R(P):=\exp\left(-\frac{i\pi}{8}P\right).
  \label{eq:Pauli-rotation}
\end{equation}
The $T$ gate on qubit $q_j$ can be written as $T_{q_j}=e^{i\pi/8}R(Z_{q_j})$.
Substituting into Eq.~\eqref{eq:exact-Clifford-T-decomposition} yields
\begin{equation}
  U=e^{i\phi'}C_tR(Z_{q_t})C_{t-1}R(Z_{q_{t-1}})\cdots
  C_1R(Z_{q_1})C_0,
\end{equation}
where $\phi':=\phi+t\pi/8$.
For every Clifford unitary $C\in\cU(\cH_A\otimes\cH_B)$, we have
\begin{equation}
  C R(P)=R(CPC^\dagger)C.
  \label{eq:commute-Clifford-through-rotation}
\end{equation}
Repeated application of this identity moves all Clifford factors to the right and gives
\begin{equation}
  U=e^{i\phi'}R(P_t)\cdots R(P_1)C.
  \label{eq:Pauli-rotation-Clifford-form}
\end{equation}
Here, $C:=C_tC_{t-1}\cdots C_1C_0$ is a Clifford unitary.
For $j=1,\ldots,t$, $P_j$ is the signed Pauli operator obtained by conjugating $Z_{q_j}$ by $C_t\cdots C_j$.

We next show that $R(P)$ can be implemented using at most one Bell pair with qubit overhead at most $(1,1)$.
Write $P=\epsilon P_A\otimes P_B$, where $\epsilon\in\{+1,-1\}$, and $P_A\in\cU(\cH_A)$ and $P_B\in\cU(\cH_B)$ are tensor products of $\mathbb{I},X,Y,Z$.
If $P_A=\mathbb{I}_A$, then $R(P)=\mathbb{I}_A\otimes\exp(-i\epsilon\pi P_B/8)$, and analogously $R(P)$ is local if $P_B=\mathbb{I}_B$.
In either case, $R(P)$ has operator Schmidt rank one and requires neither shared entanglement nor auxiliary qubits.
Henceforth assume $P_A\neq\mathbb{I}_A$ and $P_B\neq\mathbb{I}_B$.
Since $P^2=\mathbb{I}_A\otimes\mathbb{I}_B$, Eq.~\eqref{eq:Pauli-rotation} gives
\begin{equation}
  R(P)
  =\cos\left(\frac{\pi}{8}\right)\mathbb{I}_A\otimes\mathbb{I}_B
    -i\epsilon\sin\left(\frac{\pi}{8}\right)P_A\otimes P_B.
  \label{eq:Pauli-rotation-cosine-sine-expansion}
\end{equation}
To rewrite Eq.~\eqref{eq:Pauli-rotation-cosine-sine-expansion} in controlled-unitary form, let $\Pi_A^\pm:=(\mathbb{I}_A\pm P_A)/2\in\mathcal{L}(\cH_A)$ be the projectors onto the $+1$ and $-1$ eigenspaces of $P_A$, respectively.
Then $\mathbb{I}_A=\Pi_A^++\Pi_A^-$ and $P_A=\Pi_A^+-\Pi_A^-$.
Substituting these identities into Eq.~\eqref{eq:Pauli-rotation-cosine-sine-expansion} and using $P_B^2=\mathbb{I}_B$ gives
\begin{equation}
  R(P)=\Pi_A^+\otimes e^{-i\epsilon\pi P_B/8}
    +\Pi_A^-\otimes e^{i\epsilon\pi P_B/8}.
  \label{eq:Pauli-rotation-projector-expansion}
\end{equation}
Since $P_A$ is a nonidentity Pauli operator, there exists a Clifford unitary $U_A\in\cU(\cH_A)$ such that $U_A P_A U_A^\dagger=Z_a\otimes\mathbb{I}_{A\setminus a}$ for a qubit $a$ at Alice.
Here, $A\setminus a$ denotes Alice's remaining qubits.
Conjugating the projectors by $U_A$ gives
\begin{align}
  U_A\Pi_A^+U_A^\dagger
  &=\lvert0\rangle\!\langle0\rvert_a\otimes\mathbb{I}_{A\setminus a},\\
  U_A\Pi_A^-U_A^\dagger
  &=\lvert1\rangle\!\langle1\rvert_a\otimes\mathbb{I}_{A\setminus a}.
\end{align}
Consequently,
\begin{align}
  &(U_A\otimes\mathbb{I}_B)R(P)(U_A^\dagger\otimes\mathbb{I}_B)\notag\\
  &={\lvert0\rangle\!\langle0\rvert}_a\otimes\mathbb{I}_{A\setminus a}
    \otimes\exp(-i\epsilon\pi P_B/8)\notag\\
  &\quad{}+{\lvert1\rangle\!\langle1\rvert}_a\otimes\mathbb{I}_{A\setminus a}
    \otimes\exp(+i\epsilon\pi P_B/8),
  \label{eq:controlled-Pauli-rotation}
\end{align}
which applies $\exp(-i\epsilon\pi P_B/8)$ or $\exp(+i\epsilon\pi P_B/8)$ to Bob's qubits, controlled by the computational-basis value $0$ or $1$ of Alice's qubit $a$, respectively.
The controlled unitary in Eq.~\eqref{eq:controlled-Pauli-rotation} can be implemented using one Bell pair with qubit overhead $(1,1)$~\cite{eisert2000}.
Applying $U_A$ before this protocol and $U_A^\dagger$ afterward therefore implements $R(P)$ with the same resources.

Implement $C$ first using Theorem~\ref{thm:main}, and then implement $R(P_1),\ldots,R(P_t)$ in this order using the protocol above.
The local factors among $R(P_1),\ldots,R(P_t)$ require no shared entanglement, whereas each nonlocal factor $R(P_j)$ consumes one Bell pair.
Since there are at most $t$ nonlocal factors, the total Bell-pair consumption $K$ satisfies
\begin{equation}
  K\leq\log_2\OpSch(C)+t.
  \label{eq:cost-before-Clifford-bound}
\end{equation}

It remains to bound the Bell-pair cost of the Clifford factor $C$.
For this purpose, we use submultiplicativity of the operator Schmidt rank: for $M,N\in\mathcal{L}(\cH_A\otimes\cH_B)$,
\begin{equation}
  \OpSch(MN)\leq\OpSch(M)\OpSch(N).
  \label{eq:operator-Schmidt-submultiplicativity}
\end{equation}
Indeed, let $M=\sum_{j=1}^r A_j\otimes B_j$ and $N=\sum_{k=1}^{r'}A'_k\otimes B'_k$ be minimal operator Schmidt decompositions, so $r=\OpSch(M)$ and $r'=\OpSch(N)$.
Here, $A_j,A'_k\in\mathcal{L}(\cH_A)$ and $B_j,B'_k\in\mathcal{L}(\cH_B)$.
Then
\begin{equation}
  MN=\sum_{j=1}^r\sum_{k=1}^{r'}(A_jA'_k)\otimes(B_jB'_k).
\end{equation}
This product decomposition has at most $rr'$ terms, proving Eq.~\eqref{eq:operator-Schmidt-submultiplicativity}.
Equation~\eqref{eq:Pauli-rotation-Clifford-form} implies
\begin{equation}
  C=e^{-i\phi'}R(P_1)^\dagger\cdots R(P_t)^\dagger U.
\end{equation}
By Eq.~\eqref{eq:Pauli-rotation-cosine-sine-expansion}, each adjoint $R(P_j)^\dagger=R(-P_j)$ is a sum of at most two product operators and therefore has operator Schmidt rank at most two.
Consequently,
\begin{equation}
  \OpSch(C)\leq2^t\OpSch(U).
  \label{eq:collected-Clifford-Schmidt-bound}
\end{equation}
Since $C$ is Clifford, Lemma~\ref{lem:opsch-rank} shows that $\log_2\OpSch(C)$ is an integer.
Taking logarithms in Eq.~\eqref{eq:collected-Clifford-Schmidt-bound} shows that the integer $\log_2\OpSch(C)-t$ is at most $\log_2\OpSch(U)$.
Therefore,
\begin{equation}
  \log_2\OpSch(C)
  \leq\left\lfloor\log_2\OpSch(U)\right\rfloor+t.
  \label{eq:collected-Clifford-cost-bound}
\end{equation}

Combining this bound with Eq.~\eqref{eq:cost-before-Clifford-bound} gives
\begin{equation}
  K\leq\left\lfloor\log_2\OpSch(U)\right\rfloor+2t.
  \label{eq:constant-space-total-cost}
\end{equation}
The Clifford protocol has qubit overhead at most $(2,2)$, and each nonlocal $R(P_j)$ requires at most one auxiliary qubit at each party.
Reset and reuse the measured qubits between successive factors.
The complete protocol therefore has qubit overhead at most $(2,2)$.
\end{proof}

\bibliography{references}

\end{document}